\documentclass[onecolumn,12pt]{IEEEtran} 
\usepackage{color}
\usepackage{amsfonts,color,pslatex}
\usepackage{amssymb,amsmath,latexsym}

\usepackage[para]{threeparttable}

\newcommand{\tr}{{\rm Tr}}
\newcommand{\gf}{{\rm GF}}

\newcommand{\C}{{\mathcal C}}

\newcommand{\SQ}{{\mathrm{SQ}}}


\newtheorem{theorem}{Theorem}[section]
\newtheorem{lemma}[theorem]{Lemma}

\newtheorem{example}[theorem]{Example}
\newtheorem{proposition}[theorem]{Proposition}

\begin{document}

\title{A  Class of Three-Weight  Cyclic Codes\thanks{Z. Zhou's research was supported by
the Natural Science Foundation of China, Proj. No. 61201243. C. Ding's research was supported by
The Hong Kong Research Grants Council, Proj. No. 600812.}}
\author{Zhengchun Zhou\thanks{Z. Zhou is with the School of Mathematics, Southwest Jiaotong University,
Chengdu, 610031, China (email: zzc@home.swjtu.edu.cn).}
and
Cunsheng Ding\thanks{C. Ding is with the Department of Computer Science
                                                  and Engineering, The Hong Kong University of Science and Technology,
                                                  Clear Water Bay, Kowloon, Hong Kong, China (email: cding@ust.hk).}}

\date{\today}
\maketitle

\begin{abstract}
Cyclic codes are a subclass of linear codes and have applications in consumer electronics,
data storage systems, and communication systems as they have efficient encoding and
decoding algorithms. In this paper, a class of three-weight cyclic codes over $\gf(p)$
whose duals have two zeros is presented, where $p$ is an odd prime. The weight distribution
of this class of cyclic codes is settled. Some of the cyclic codes are optimal. The duals of a
subclass of the cyclic  codes are also studied and proved to be optimal.
\end{abstract}

\begin{keywords}
Cyclic codes, weight distribution, quadratic form, sphere packing bound.
\end{keywords}

\section{Introduction}\label{sec-intro}

Throughout this paper, let $m$ and $k$ be positive integers such that $s=m/e$ is odd and $s\geq 3$,
where $e=\gcd(m,k)$. Let  $p$ be an odd prime and $q=p^e$. Let $\pi$ be a primitive element
of the finite field $\gf(q^s)$, where $q^s=p^m$.

\vspace{2mm}

An  $[n,\ell,d]$ linear code over $\gf(p)$ is an
$\ell$-dimensional subspace of $\gf(p)^n$ with minimum (Hamming)
distance $d$.
Let $A_i$ denote the number of codewords with Hamming weight $i$
in a code $\mathcal{C}$ of length $n$. The weight enumerator of $\mathcal{C}$  is defined
by
\begin{eqnarray*}
1+A_1x+A_2x^2+\cdots+A_nx^n.
\end{eqnarray*}
The sequence $(1,A_1,A_2,\cdots,A_{n})$ is called the weight distribution
of the code $\mathcal{C}$.

\vspace{2mm}

An $[n,\ell]$  linear code $\C$ over the finite field  $\gf(p)$ is called cyclic if
$(c_0,c_1, \cdots, c_{n-1}) \in \C$ implies $(c_{n-1}, c_0, c_1, \cdots, c_{n-2})
\in \C$.
By identifying the vector $(c_0,c_1, \cdots, c_{n-1}) \in \gf(p)^n$
with
$$
c_0+c_1x+c_2x^2+ \cdots + c_{n-1}x^{n-1} \in \gf(p)[x]/(x^n-1),
$$
any code $\C$ of length $n$ over $\gf(p)$ corresponds to a subset of $\gf(p)[x]/(x^n-1)$.
The linear code $\C$ is cyclic if and only if the corresponding subset in $\gf(p)[x]/(x^n-1)$
is an ideal of the polynomial residue class ring $\gf(p)[x]/(x^n-1)$.
It is well known that every ideal of $\gf(p)[x]/(x^n-1)$ is principal. Let $\C=(g(x))$,
where $g(x)$ is monic and has the least
degree. Then $g(x)$ is called the {generator polynomial} and
$h(x)=(x^n-1)/g(x)$ is referred to as the {parity-check} polynomial of
$\C$. A cyclic code is called irreducible
if its parity-check polynomial is irreducible  over $\gf(p)$. Otherwise, it is called reducible.

\vspace{2mm}

The weight distributions of both irreducible and reducible  cyclic codes have been interesting subjects
of study for many years. For information on the weight distribution of irreducible cyclic codes, the reader
is referred to \cite{MacWilliams}, \cite{van}, \cite{Schmidt}, and the recent survey \cite{Ding12}. Information
on the weight distribution of reducible cyclic codes could be found in \cite{YCD}, \cite{Feng07}, \cite{Luo081},
\cite{Luo082}, \cite{Ding11}, \cite{Ma11}, and \cite{Wang12}.

\vspace{2mm}

Let $h_{0}(x)$, $h_1(x)$, and $h_2(x)$ be the minimal polynomials of $\pi^{-1}$, $(-\pi)^{-1}$, and
$\pi^{-(p^k+1)/2}$ over $\gf(p)$, respectively. It is easy to show that $h_0(x)$, $h_1(x)$, and $h_2(x)$
are polynomials of  degree $m$  and are pairwise distinct. The cyclic code over $\gf(p)$ with length
$p^m-1$ and parity-check polynomial $h_0(x)h_2(x)$ has been extensively studied and is a three-weight
code in the following cases.
\begin{itemize}
\item When $k$  is even and $e=1$, this three-weight cyclic code is due to Trachtenberg \cite{Trachtenberg}.
\item When $k$ is odd, $e=1$, and $p=3$, the cyclic code is related to some planar functions and is proved
          to have only three nonzero weights by Yuan, Carlet, and Ding \cite{CDY05,YCD}.
\item  When $k$ and $e$ are odd and $p$ is any odd prime, Luo and Feng \cite{Luo081} proved that the code
          has only three nonzero weights.
\end{itemize}

The objective of this paper is to study the cyclic code over $\gf(q)$ with length $p^m-1$ and  parity-check
polynomial $h_1(x)h_2(x)$. It will be shown that this cyclic code has only three nonzero weights when $k/e$
is odd, or $k$ is even and $e$ is odd. The weight distribution of the proposed cyclic codes will be determined.
Some of the cyclic codes with parity-check polynomials $h_1(x)h_2(x)$ are optimal. The duals of a subclass of
the cyclic codes are also optimal. The three-weight cyclic codes dealt with in this paper may have applications
in  association schemes \cite{CG84} and secret sharing schemes \cite{CDY05}.

This paper is organized as follows.
Section \ref{sec-quad} introduces necessary results on quadratic forms that will be needed later in this
paper. Section \ref{sec-3wtcodes} defines the class of cyclic codes and determines their weight distributions.
Section \ref{sec-duals3wc} studies the duals of a subclass of the cyclic codes. Section \ref{sec-summ}
concludes this paper and makes some comments on this topic.

\section{Quadratic forms over finite fields}\label{sec-quad}

In this section, we give a brief introduction to the theory of quadratic forms over finite fields which is
needed to calculate the weight distribution of the cyclic codes in the sequel. Quadratic forms  have
been well studied (see the monograph \cite{Niddle} and the references therein), and have applications
in sequence design (\cite{Trachtenberg}, \cite{Klapperodd}),  and coding theory (\cite{Feng07},
\cite{Luo081}, \cite{Luo082}).

Identifying $\gf(q^s)$ with the $s$-dimensional
$\gf(q)$-vector space $\gf(q)^s$, a function $Q$ from $\gf(q^s)$ to $\gf(q)$ can be regarded as an $s$-variable polynomial on $\gf(q)$.
The former is called a quadratic form over $\gf(q)$ if the latter is a homogeneous polynomial of degree two in the form
\begin{eqnarray*}
Q(x_1,x_2,\cdots,x_s)=\sum_{1\leq i\leq j\leq s}a_{ij}x_ix_j
\end{eqnarray*}
where $a_{ij}\in \gf(q)$, and we use a basis $\{\beta_1,\beta_2,\cdots,\beta_{s}\}$ of $\gf(q^s)$ over $\gf(q)$ and identify $x=\sum_{i=1}^sx_i\beta_i$ with
the vector $(x_1,x_2,\cdots,x_{s})\in \gf(q)^s$.
The rank of the
quadratic form $Q(x)$ is defined as the codimension of the $\gf(q)$-vector space
\begin{eqnarray*}
V=\{x\in \gf(q^s)| Q(x+z)-Q(x)-Q(z)=0 \textrm{~for~all~}z\in \gf(q^s)\}.
\end{eqnarray*}
That is $|V|=q^{s-r}$ where $r$ is the rank of $Q(x)$.

For a quadratic form $f(x)$ in $s$ variables over $\gf(q)$, there exists a
symmetric matrix $A$ of order $s$ over $\gf(q)$ such that $f(x)=XAX'$,
where $X=(x_1,\cdots,x_s)\in \gf(q)^s$ and $X'$ denotes the transpose
of $X$. For a symmetric matrix $A$ of order $s$ over $\gf(q)$, it is known  that
 there is a nonsingular matrix  $T$ of order $s$ such that $TAT'$
is a diagonal matrix \cite{Niddle}. Under the nonsingular linear
substitution $X=ZT$ with $Z=(z_1,z_2,\cdots,z_s)\in \gf(q)^s$, we then have
\begin{eqnarray}\label{eqn_f_homogenous}
f(x)=ZTAT'Z'=\sum_{i=1}^rd_iz_i^2
\end{eqnarray}
where $r$ is the rank of $f(x)$  and  $d_i\in \gf(q)^*$. Let $\Delta=d_1d_2\cdots d_r$ for
$r\geq 1$ and $\Delta=1$ for $r=0$. Let $\eta_1$ denote the quadratic
multiplicative character of $\gf(q)$. Then $\eta_1(\Delta)$ is an invariant of $A$  under
the conjugate action of $\mathcal{M}\in {\textrm{GL}}_s(\gf(q))$. The following results are useful in the sequel.

\begin{lemma}(\cite{Niddle}, \cite{Luo082})\label{lemma_quadra_basis}
With the notations as above, we have
\begin{eqnarray*}
\sum_{x\in \gf(q^s)}\zeta_p^{\tr_{q/p}(f(x))}=\left\{\begin{array}{cc}
\eta_1(\Delta)q^{s-{r/2}},&\textrm{~if~} q\equiv 1~(\bmod~4),   \\
\eta_1(\Delta)(\sqrt{-1})^{r}q^{s-{r/2}},&\textrm{~if~} q\equiv 3~(\bmod~4)                                                                                 \\
\end{array}\right.
\end{eqnarray*}
for any quadratic form $f(x)$ in $s$ variables of rank $r$ over $\gf(q)$, where  $\zeta_p$ is a primitive $p$-th  root of unity,
and $\tr_{q/p}(x)$ denotes the trace function from $\gf(q)$ to $\gf(p)$.
\end{lemma}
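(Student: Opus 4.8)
The plan is to reduce the statement to the evaluation of a single quadratic Gauss sum by first diagonalising the form. I would invoke the normalization (\ref{eqn_f_homogenous}): after the nonsingular substitution $X=ZT$ the form becomes $f=\sum_{i=1}^{r}d_iz_i^2$ with $d_i\in\gf(q)^*$. Since $Z\mapsto ZT$ is a bijection of $\gf(q)^s$, equivalently of $\gf(q^s)$, the character sum is unchanged, so that
\begin{eqnarray*}
\sum_{x\in\gf(q^s)}\zeta_p^{\tr_{q/p}(f(x))}=\prod_{i=1}^{r}\left(\sum_{z\in\gf(q)}\zeta_p^{\tr_{q/p}(d_iz^2)}\right)\prod_{i=r+1}^{s}\left(\sum_{z\in\gf(q)}1\right).
\end{eqnarray*}
The $s-r$ trivial factors contribute $q^{s-r}$, so everything comes down to the one-variable sums attached to the diagonal entries $d_i$.

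Next I would evaluate each one-dimensional sum. Writing $\psi(y)=\zeta_p^{\tr_{q/p}(y)}$ for the canonical additive character and $g=\sum_{y\in\gf(q)^*}\eta_1(y)\psi(y)$ for the quadratic Gauss sum, the standard counting argument (as $z$ runs over $\gf(q)$ the value $z^2$ is $0$ once and hits each nonzero square twice, so the number of preimages of $u\ne 0$ is $1+\eta_1(u)$) gives $\sum_{z\in\gf(q)}\psi(dz^2)=\eta_1(d)\,g$ for every $d\in\gf(q)^*$. Multiplying the $r$ such factors and using $\eta_1(d_1)\cdots\eta_1(d_r)=\eta_1(\Delta)$, I obtain
\begin{eqnarray*}
\sum_{x\in\gf(q^s)}\zeta_p^{\tr_{q/p}(f(x))}=\eta_1(\Delta)\,g^{\,r}\,q^{s-r}.
\end{eqnarray*}

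It remains to identify $g^{\,r}q^{s-r}$ with $q^{s-r/2}$, respectively $(\sqrt{-1})^{r}q^{s-r/2}$. The case of even $r$ is straightforward: from $g^2=\eta_1(-1)q=(-1)^{(q-1)/2}q$ one gets $g^{\,r}=q^{r/2}$ when $q\equiv 1~(\bmod~4)$ and $g^{\,r}=(-1)^{r/2}q^{r/2}=(\sqrt{-1})^{r}q^{r/2}$ when $q\equiv 3~(\bmod~4)$, which reproduces the claimed values. For odd $r$ the square alone is not enough: one needs the exact value of $g$, supplied by the classical Gauss/Davenport--Hasse evaluation of the quadratic Gauss sum over $\gf(q)$ with $q=p^e$ (as recorded in \cite{Niddle}). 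I expect this last step to be the main obstacle, since $g^2$ only pins $g$ down to a sign ($g=\pm q^{1/2}$ when $q\equiv 1~(\bmod~4)$) and it is precisely here that the exact value of $g$ fixes the branch of $q^{1/2}$ and the fourth root of unity appearing in the $q\equiv 3~(\bmod~4)$ case, requiring careful bookkeeping of the $q~(\bmod~4)$ dichotomy against the underlying $p~(\bmod~4)$ behaviour and the parity of $e$. Once the value of $g$ is inserted, collecting the powers of $q$ and of $\sqrt{-1}$ yields the two stated cases.
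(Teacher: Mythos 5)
First, a point of comparison: the paper contains no proof of this lemma at all --- it is quoted with citations to \cite{Niddle} and \cite{Luo082} --- so there is no internal argument to measure yours against; what you are doing is reconstructing the classical proof behind those citations. Your reconstruction is sound as far as it goes. The diagonalization via (\ref{eqn_f_homogenous}), the factorization of the character sum into $r$ one-variable sums times $q^{s-r}$, the evaluation $\sum_{z\in\gf(q)}\zeta_p^{\tr_{q/p}(dz^2)}=\eta_1(d)\,g$, and hence the identity $\sum_{x\in\gf(q^s)}\zeta_p^{\tr_{q/p}(f(x))}=\eta_1(\Delta)\,g^{\,r}q^{s-r}$ are all correct, and your handling of even $r$ via $g^2=\eta_1(-1)q$ is complete and matches the stated formula.

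The genuine gap is exactly the step you flag as ``the main obstacle'': for odd $r$ you never evaluate $g$, and this step cannot in fact be completed so as to yield the lemma as printed. The exact value (Davenport--Hasse, or Theorem 5.15 of \cite{Niddle}) is $g=(-1)^{e-1}\sqrt{q}$ if $p\equiv 1~(\bmod~4)$ and $g=(-1)^{e-1}(\sqrt{-1})^{e}\sqrt{q}$ if $p\equiv 3~(\bmod~4)$, which gives $\eta_1(\Delta)(-1)^{(e-1)r}q^{s-r/2}$, respectively $\eta_1(\Delta)(-1)^{(e-1)r}(\sqrt{-1})^{er}q^{s-r/2}$. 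For even $r$ this collapses to the displayed formula, but for odd $r$ it differs by a sign for certain parameters. Concretely, take $q=25$ ($p=5$, $e=2$) and $f(x)=x^2$, so $s=r=1$ and $\Delta=1$: Davenport--Hasse gives $g=(-1)^{e-1}g_p^{\,e}$ with $g_p=\sqrt{5}$, hence the sum equals $-5$, whereas the lemma claims $+5$ because $25\equiv 1~(\bmod~4)$. So the ``careful bookkeeping'' you defer is not a routine verification: carried out honestly, it shows that the statement phrased in terms of $q~(\bmod~4)$ alone is false for odd rank whenever, e.g., $p\equiv 1~(\bmod~4)$ with $e$ even, or $p\equiv 3~(\bmod~4)$ with $e\equiv 3~(\bmod~4)$; a correct general statement must carry the factor $(-1)^{(e-1)r}$ or impose hypotheses on $e$. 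This defect, inherited from the quoted literature, is immaterial for the rest of the paper --- Lemma \ref{lemma_quadra_important} and Lemma \ref{Lemma_value_T(a,b)} use only the even-rank values up to sign and the proportionality $\sum_{x}\zeta_p^{\tr_{q/p}(yf(x))}=\eta_1(y^r)\sum_{x}\zeta_p^{\tr_{q/p}(f(x))}$, both of which are insensitive to the sign of $g^{\,r}$ --- but as a proof of the lemma as literally stated, your outline cannot be closed, and you should either prove the corrected statement or restrict to the cases (even $r$, or odd $e$ with $p\equiv 1~(\bmod~4)$, etc.) where the printed formula is valid.
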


\begin{lemma}\label{lemma_quadra_important}
Let  $f(x)$ be a quadratic form  in $s$ variables  of rank $r$ over $\gf(q)$.
\begin{itemize}
\item If $r$ is even, then
\begin{eqnarray*}
\sum_{y\in \gf(p)^*}\sum_{x\in \gf(q^s)}\zeta_p^{\tr_{q/p}(y f(x))}=\pm (p-1) q^{s-r/2}.
\end{eqnarray*}
\item If $r$ and $e$ are odd, then
\begin{eqnarray*}
\sum_{y\in \gf(p)^*}\sum_{x\in \gf(q^s)}\zeta_p^{\tr_{q/p}(y f(x))}&&=0.
\end{eqnarray*}
\end{itemize}
\end{lemma}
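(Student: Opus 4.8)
The plan is to reduce Lemma \ref{lemma_quadra_important} to the character-sum evaluation already provided in Lemma \ref{lemma_quadra_basis}. First I would apply Lemma \ref{lemma_quadra_basis} to the quadratic form $yf(x)$ for each fixed $y\in\gf(p)^*$. Note that scaling a rank-$r$ form by a nonzero constant $y$ leaves the rank unchanged but multiplies the discriminant by $y^r$, so the invariant $\eta_1(\Delta)$ becomes $\eta_1(y^r\Delta)=\eta_1(y)^r\eta_1(\Delta)$. Here a subtlety must be handled carefully: $y$ lies in the prime field $\gf(p)$, whereas $\eta_1$ is the quadratic character of $\gf(q)=\gf(p^e)$. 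The norm relation gives $\eta_1(y)=\eta_0(y)^e$ (equivalently $\eta_1(y)=\eta_0(N_{q/p}(y))$ for $y\in\gf(p)^*$, which evaluates to $\eta_0(y)^e$), where $\eta_0$ is the quadratic character of $\gf(p)$. Thus the inner sum, for each $y$, is $\eta_0(y)^{er}\,\eta_1(\Delta)\,q^{s-r/2}$ in the case $q\equiv1\ (\bmod\ 4)$, with the analogous factor $(\sqrt{-1})^{r}$ appended when $q\equiv3\ (\bmod\ 4)$.

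Next I would sum over $y\in\gf(p)^*$, which amounts to evaluating $\sum_{y\in\gf(p)^*}\eta_0(y)^{er}$. This is where the two cases separate naturally. When $r$ is even, $er$ is even, so $\eta_0(y)^{er}=1$ for every $y\in\gf(p)^*$ and the sum equals $p-1$. The constant prefactors $\eta_1(\Delta)$ and possibly $(\sqrt{-1})^{r}$ are each $\pm1$ (using $q\equiv3\ (\bmod\ 4)$ together with $r$ even to see $(\sqrt{-1})^{r}=\pm1$), so the total is $\pm(p-1)q^{s-r/2}$, giving the first bullet. When $r$ is odd and $e$ is odd, $er$ is odd, so $\eta_0(y)^{er}=\eta_0(y)$, and $\sum_{y\in\gf(p)^*}\eta_0(y)=0$ because the quadratic character of $\gf(p)$ takes the values $+1$ and $-1$ equally often on $\gf(p)^*$. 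Hence the entire double sum vanishes, which is the second bullet; the prefactors are irrelevant once the $y$-sum is zero.

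The main obstacle, and the only genuinely delicate point, is the character-lifting identity $\eta_1(y)=\eta_0(y)^e$ for $y\in\gf(p)^*$, and keeping track of how the discriminant transforms under the substitution $f\mapsto yf$. I would verify the former by writing $\eta_1=\eta_0\circ N_{q/p}$ (the quadratic character of an extension is the composition of the base-field quadratic character with the norm, since the norm is surjective and multiplicative) and using $N_{q/p}(y)=y^{(q-1)/(p-1)}=y^{1+p+\cdots+p^{e-1}}$, which for $y\in\gf(p)^*$ equals $y^e$; therefore $\eta_1(y)=\eta_0(y^e)=\eta_0(y)^e$. For the discriminant, recall that if $f(x)=\sum_{i=1}^r d_i z_i^2$ in diagonalized form, then $yf(x)=\sum_{i=1}^r (yd_i) z_i^2$, so $\Delta$ is replaced by $y^r\Delta$ and $\eta_1(y^r\Delta)=\eta_0(y)^{er}\eta_1(\Delta)$. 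Everything else is routine bookkeeping, and I would not expect the parity analysis of $er$ to present any difficulty once these two facts are in place.
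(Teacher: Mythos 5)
Your proof is correct and takes essentially the same route as the paper's: diagonalize $f$, note that the discriminant of $yf$ is $y^{r}\Delta$, use Lemma \ref{lemma_quadra_basis} to reduce the double sum to $\bigl(\sum_{x\in\gf(q^s)}\zeta_p^{\tr_{q/p}(f(x))}\bigr)\bigl(\sum_{y\in\gf(p)^*}\eta_1(y)^{r}\bigr)$, and finish with the parity analysis of the character sum over $y$. The only difference is one of detail: the paper simply asserts that $\eta_1$ and $\eta_0$ agree on $\gf(p)^*$ when $e$ is odd, whereas you justify the slightly more general identity $\eta_1(y)=\eta_0(y)^{e}$ via the norm compatibility $\eta_1=\eta_0\circ N_{q/p}$, which is a valid (and cleaner) way to supply that omitted step.
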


\begin{proof}
By a nonsingular linear substitution as in (\ref{eqn_f_homogenous}), we have
$f(x)=\sum_{i=1}^{r}d_iz_i^2$, where $d_i\in \gf(q)^*$ and $(z_1,z_2,\cdots,z_r)\in \gf(q)^r$.
Note that, for each $y\in \gf(p)^*$, $yf(x)$ ia a quadratic form over $\gf(q)$ with rank $r$ and $yf(x)=\sum_{i=1}^{r}(yd_i)z_i^2$.
According to Lemma \ref{lemma_quadra_basis}, we have
\begin{eqnarray*}
\sum_{y\in \gf(p)^*}\sum_{x\in \gf(q^s)}\zeta_p^{\tr_{q/p}(y f(x))}=\sum_{x\in \gf(q^s)}\zeta_p^{\tr_{q/p}(f(x))}\sum_{y\in \gf(p)^*}\eta_1(y^r).
\end{eqnarray*}
Thus, when $r$ is even,
\begin{eqnarray*}
\sum_{y\in \gf(p)^*}\sum_{x\in \gf(q^s)}\zeta_p^{\tr_{q/p}(y f(x))}=\pm (p-1) q^{s-r/2}.
\end{eqnarray*}
On the other hand, when $r$ and $e$ are both odd,
\begin{eqnarray*}
\lefteqn{\sum_{y\in \gf(p)^*}\sum_{x\in \gf(q^s)}\zeta_p^{\tr_{q/p}(y f(x))}} \\
&&=\sum_{x\in \gf(q^s)}\zeta_p^{\tr_{q/p}(f(x))}\sum_{y\in \gf(p)^*}\eta_1(y^r)\\
&&=\sum_{x\in \gf(q^s)}\zeta_p^{\tr_{q/p}(f(x))}\sum_{y\in \gf(p)^*}\eta_1(y)\\
&&=\sum_{x\in \gf(q^s)}\zeta_p^{\tr_{q/p}(f(x))}\sum_{y\in \gf(p)^*}\eta_0(y)\\
&&=0
\end{eqnarray*}
where $\eta_0$ is the quadratic multiplicative character of $\gf(p)$ and in the third identity we used the fact that
$\eta_0(x)=\eta_1(x)$ for any $x\in \gf(p)^*$ since $e$ is odd.
\end{proof}


\section{The class of three-weight cyclic codes and their weight distribution}\label{sec-3wtcodes}

We follow the notations fixed in Section \ref{sec-intro}.
From now on, we always assume that $\lambda$ is a fixed nonsquare in $\gf(q)$. Note that $s$ is odd,
thus  $\lambda$ is also a nonsquare in $\gf(q^s)$. Let $\textrm{SQ}$ denote the set of all squares in $\gf(q^s)^*$. Then
$\lambda x$ runs through all nonsquares in $\gf(q^s)$ as $x$ runs through $\textrm{SQ}$.
The following result is easy to prove and is useful in the sequel.

\vspace{2mm}

\begin{proposition}\label{Propo_on_lambda}
$\lambda^{(1+p^{k})/ 2}=\lambda$ if $k/e$ is even, and
$\lambda^{(1+p^{k})/ 2}=-\lambda$ otherwise.
\end{proposition}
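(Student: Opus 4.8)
The plan is to reduce everything to the multiplicative order of $\lambda$ in $\gf(q)^*$ together with Euler's criterion. Since $e=\gcd(m,k)$ divides $k$, I would set $t=k/e$, so that $p^k=(p^e)^t=q^t$, and rewrite the exponent as
\[
\frac{1+p^k}{2}=1+\frac{q^t-1}{2},
\]
which is a genuine integer because $q$ is odd. Hence $\lambda^{(1+p^k)/2}=\lambda\cdot\lambda^{(q^t-1)/2}$, and the whole question reduces to evaluating $\lambda^{(q^t-1)/2}$. Note that only the fact that $\lambda$ is a nonsquare in $\gf(q)$ will be used here, not in $\gf(q^s)$.

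Next I would factor $q^t-1=(q-1)N$, where $N=1+q+\cdots+q^{t-1}$. The key elementary observation is that $N$ is a sum of $t$ odd terms, so $N\equiv t\pmod 2$; thus the parity of $N$ matches the parity of $k/e$. With this in hand I would split into two cases. If $k/e$ is even, then $N$ is even, so $(q^t-1)/2=(q-1)(N/2)$ is a multiple of $q-1$; since $\lambda^{q-1}=1$, this gives $\lambda^{(q^t-1)/2}=1$ and therefore $\lambda^{(1+p^k)/2}=\lambda$. If $k/e$ is odd, then $N$ is odd and $(q^t-1)/2=\frac{q-1}{2}\,N$, so
\[
\lambda^{(q^t-1)/2}=\bigl(\lambda^{(q-1)/2}\bigr)^{N}.
\]
At this point I invoke Euler's criterion: because $\lambda$ is a nonsquare in $\gf(q)^*$, we have $\lambda^{(q-1)/2}=\eta_1(\lambda)=-1$, and raising to the odd power $N$ leaves $-1$, yielding $\lambda^{(1+p^k)/2}=-\lambda$.

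There is no genuine obstacle in this argument; the only point that requires a little care is the bookkeeping of the division by $2$, namely writing $(q^t-1)/2$ as $(q-1)(N/2)$ in the even case and as $\frac{q-1}{2}\,N$ in the odd case, so that every exponent appearing is an integer and the identities $\lambda^{q-1}=1$ and $\lambda^{(q-1)/2}=-1$ apply cleanly.
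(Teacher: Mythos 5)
Your proof is correct. Note that the paper does not actually supply a proof of this proposition (it is dismissed as ``easy to prove''), so there is nothing to compare against; your argument fills the gap completely and is surely the intended one. Every step checks out: since $e=\gcd(m,k)$ divides $k$, writing $p^k=q^t$ with $t=k/e$ and splitting $\frac{1+q^t}{2}=1+\frac{q^t-1}{2}$ reduces the claim to evaluating $\lambda^{(q^t-1)/2}$; the parity bookkeeping $N=1+q+\cdots+q^{t-1}\equiv t\pmod 2$ is right because $q$ is odd, and in the even case $(q^t-1)/2=(q-1)(N/2)$ kills the factor via $\lambda^{q-1}=1$, while in the odd case Euler's criterion $\lambda^{(q-1)/2}=\eta_1(\lambda)=-1$ raised to the odd power $N$ gives $-1$. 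You are also right to observe that only nonsquareness of $\lambda$ in $\gf(q)$ is needed here, not in $\gf(q^s)$; the latter fact (which follows from $s$ being odd) is used elsewhere in the paper, not in this proposition.
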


\vspace{2mm}

By Delsarte's Theorem \cite{Delsarte}, the code $\C$ with the parity-check polynomial $h_1(x)h_2(x)$ can be expressed as
\begin{eqnarray}\label{eqn_def_C}
\C=\{{\bf c}_{(a,b)}| a,b\in \gf(q^s)\}
\end{eqnarray}
where
\begin{eqnarray*}
{\bf c}_{(a,b)}=\left(\tr_{q^s/p}\left(a(-\pi)^t+b\pi^{{(p^k+1)t/ 2}}\right)\right)_{t=0}^{q^s-2}.
\end{eqnarray*}

In terms of
exponential sums, the weight of the codeword ${\bf c}_{(a,b)}=(c_0,c_1,\cdots,c_{q^s-2})$ in $\C$ is given by
\begin{eqnarray*}
\lefteqn{{\textrm{WT}}({\bf c}_{(a,b)}) } \\
&&=\#\{0\leq t\leq q^s-2: c_t \neq 0\}\\
&&=q^s-1-\frac{1}{p}\sum_{t=0}^{q^s-2}\sum_{y\in {\gf(p)}}\zeta_p^{y c_t}\\
&&=q^s-1-\frac{1}{p}\sum_{y\in {\gf(p)}}\sum_{t=0}^{(q^s-3)/2}\left(\zeta_p^{\tr_{q^s/p}(y a\pi^{2t}+y b (\pi^{2t})^{(p^k+1)/2})}+ \zeta_p^{\tr_{q^s/p}(-y a \pi \pi^{2t}+y b (\pi \pi^{2t})^{(p^k+1)/2})}\right)\\
&&=q^s-1-\frac{1}{p}\sum_{y\in {\gf(p)}}\sum_{x\in {\SQ}}\left(\zeta_p^{\tr_{q^s/p}(y a x+y b x^{(p^k+1)/2})}+ \zeta_p^{\tr_{q^s/p}(-y a \pi x+y b (\pi x)^{(p^k+1)/2})}\right)\\
&&=q^s-1-\frac{1}{p}\sum_{y\in {\gf(p)}}\sum_{x\in {\SQ}}\left(\zeta_p^{\tr_{q^s/p}(y a x+y b x^{(p^k+1)/2})}+ \zeta_p^{\tr_{q^s/p}(-y a \lambda x+y b  (\lambda x)^{(p^k+1)/2})}\right)\\
&&=q^s-1-\frac{1}{2p}\sum_{y\in {\gf(p)}}\sum_{x\in \gf(q^s)^*}\left(\zeta_p^{\tr_{q^s/p}(y a x^2+y b x^{p^{k+1}})}+ \zeta_p^{\tr_{q^s/p}(-y a \lambda x^2+y b \lambda^{(p^k+1)/2} x^{p^k+1})} \right)\\
&&=p^m-p^{m-1}-\frac{1}{2p}\sum_{y\in {\gf(p)^*}}\sum_{x\in \gf(q^s)}\left(\zeta_p^{\tr_{q^s/p}(y ax^2+y b x^{p^k+1})}+ \zeta_p^{\tr_{q^s/p}(-y a \lambda x^2+
y  b \lambda^{(p^k+1)/2} x^{p^k+1})}\right)
\end{eqnarray*}
where  in the fifth identity we used the fact that both $\pi x$ and $\lambda x$
run through all nonsquares in $\gf^*(q^s)$ as $x$ runs through $\textrm{SQ}$. It then follows from Proposition \ref{Propo_on_lambda} that
\begin{itemize}
\item when $k/e$ is even,
\begin{eqnarray}\label{eqn_weight_even_case}
{\textrm{WT}}({\bf c}_{(a,b)})\nonumber=p^m-p^{m-1}-\frac{1}{2p}S(a,b)
\end{eqnarray}
where
\begin{eqnarray}\label{eqn(s)(a,b)}
S(a,b)=\sum_{y\in {\gf(p)^*}}\sum_{x\in \gf(q^s)}\left(\zeta_p^{\tr_{q^s/p}(y ax^2+y b x^{p^k+1})}+ \zeta_p^{\tr_{q^s/p}(-y a \lambda x^2+
y  b \lambda x^{p^k+1})}\right);
\end{eqnarray}

\item when $k/e$ is odd,
\begin{eqnarray}\label{eqn_weight_odd_case}
&&{\textrm{WT}}({\bf c}_{(a,b)})\nonumber=p^m-p^{m-1}-\frac{1}{2p}T(a,b)
\end{eqnarray}
where
\begin{eqnarray}\label{eqn_T(a,b)}
T(a,b)=\sum_{y\in {\gf(p)^*}}\sum_{x\in \gf(q^s)}\left(\zeta_p^{\tr_{q^s/p}(y ax^2+y b x^{p^k+1})}+ \zeta_p^{\tr_{q^s/p}(-y a \lambda x^2-
y  b \lambda x^{p^k+1})}\right).
\end{eqnarray}

\end{itemize}

\vspace{2mm}

Based on the discussions above,  the weight distribution of the code $\C$ is completely determined by
the value distribution of $S(a,b)$ and $T(a,b)$.
To calculate the value distribution of $S(a,b)$ and $T(a,b)$,
we need a series of lemmas.
Before introducing them,  we define
\begin{eqnarray}\label{eqn_Q_{a,b}}
Q_{a,b}(x)=\tr_{q^s/q}(ax^{2}+bx^{1+p^k}), ~~x\in \gf{(q^s)}.
\end{eqnarray}
for each $(a,b)\in \gf(q^s)^2$.

\vspace{2mm}

\begin{lemma}(\cite{Feng07}, \cite{Luo082})\label{lemma_quadra_rank}
For any $(a,b)\in \gf(q^s)^2\setminus \{(0,0)\}$,  the function $Q_{a,b}$ of (\ref{eqn_Q_{a,b}}) is a quadratic form over $\gf(q)$
 with rank  $s, s-1$, or $s-2$.
\end{lemma}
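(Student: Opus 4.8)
The plan is to separate the two assertions---that $Q_{a,b}$ is a quadratic form over $\gf(q)$, and that its rank lies in $\{s,s-1,s-2\}$---and to reduce the rank computation to counting the roots of an associated linearized polynomial. First I would observe that, since $e=\gcd(m,k)$ divides $k$, the power $p^k=q^{k/e}$ is a power of $q$, so the Frobenius map $x\mapsto x^{p^k}$ is $\gf(q)$-linear. Writing $x=\sum_{i=1}^{s}x_i\beta_i$ with $x_i\in\gf(q)$ in a fixed basis, one has $x^{p^k}=\sum_i x_i\beta_i^{p^k}$ (because $x_i^{p^k}=x_i$ for $x_i\in\gf(q)$), so both $ax^2$ and $bx^{1+p^k}=x\cdot(bx^{p^k})$ expand into homogeneous degree-two expressions in the $x_i$; applying the $\gf(q)$-linear map $\tr_{q^s/q}$ then shows $Q_{a,b}$ is a quadratic form over $\gf(q)$.

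For the rank, I would work with the associated symmetric bilinear form $B(x,z)=Q_{a,b}(x+z)-Q_{a,b}(x)-Q_{a,b}(z)$, whose radical $V$ has $\gf(q)$-codimension equal to $\rank(Q_{a,b})$; thus it suffices to show $\dim_{\gf(q)}V\le 2$. A direct expansion gives $B(x,z)=\tr_{q^s/q}\bigl(2axz+b(xz^{p^k}+x^{p^k}z)\bigr)$, and using the trace identity $\tr_{q^s/q}(u^{p^k})=\tr_{q^s/q}(u)$ to remove the Frobenius from the middle term, this can be rewritten as $B(x,z)=\tr_{q^s/q}\bigl(z\,L(x)\bigr)$ with $L(x)=2ax+b^{p^{-k}}x^{p^{-k}}+bx^{p^k}$. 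By nondegeneracy of the trace, $V=\ker L$, and after raising the equation $L(x)=0$ to the $p^k$-th power (a bijection of $\gf(q^s)$) the radical becomes the set of roots in $\gf(q^s)$ of the linearized polynomial $\psi(x)=b^{p^k}x^{p^{2k}}+(2a)^{p^k}x^{p^k}+bx$.

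The key structural point is that $p^k=q^{k/e}$ and $p^{2k}=q^{2k/e}$, so in terms of the automorphism $\sigma\colon x\mapsto x^{p^k}$ the map $\psi$ reads $\psi=b^{p^k}\sigma^2+(2a)^{p^k}\sigma+b\cdot\mathrm{id}$, a polynomial of degree only $2$ in $\sigma$. Here I would use that $\gcd(k/e,m/e)=1$ (which follows from $e=\gcd(m,k)$): it guarantees that $\sigma$ generates $\mathrm{Gal}(\gf(q^s)/\gf(q))$ with fixed field exactly $\gf(q)$ and has order $s\ge 3$. The hard part---and the main obstacle---is turning the low $\sigma$-degree into the bound $\dim_{\gf(q)}\ker\psi\le 2$, since as an ordinary polynomial $\psi$ has the much larger degree $p^{2k}$ and the naive degree bound is useless. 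I would resolve this by a Moore-determinant argument: if $\ker\psi$ contained three $\gf(q)$-linearly independent elements $x_1,x_2,x_3$, then the relations $\psi(x_j)=0$ would exhibit the vector $(b,(2a)^{p^k},b^{p^k})^{\mathsf T}$, which is nonzero since $(a,b)\neq(0,0)$, in the kernel of the matrix $\bigl(\sigma^{i}(x_j)\bigr)_{0\le i\le 2,\,1\le j\le 3}$, forcing its determinant to vanish; but this determinant is a generalized Moore determinant for the generator $\sigma$ of $\mathrm{Gal}(\gf(q^s)/\gf(q))$, which is nonzero precisely when $x_1,x_2,x_3$ are linearly independent over the fixed field $\gf(q)$---a contradiction. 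Hence $\dim_{\gf(q)}V=\dim_{\gf(q)}\ker\psi\le 2$, giving $\rank(Q_{a,b})\ge s-2$; since trivially $\rank(Q_{a,b})\le s$, the rank is $s$, $s-1$, or $s-2$, completing the proof.
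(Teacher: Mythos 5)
Your proof is correct, but the comparison here is unusual: the paper offers no proof of this lemma at all --- it is imported wholesale from \cite{Feng07} and \cite{Luo082}. The closest the paper comes is inside its own proof of Lemma \ref{Lemma_value(s)(a,b)}, where exactly your setup appears: the radical of $Q_{a,b}$ is identified, via $Q_{a,b}(x+z)-Q_{a,b}(x)-Q_{a,b}(z)=\tr_{q^s/q}\bigl(z(2ax+bx^{p^k}+b^{p^{-k}}x^{p^{-k}})\bigr)$ and nondegeneracy of the trace, with the root set of $b^{p^k}x^{p^{2k}}+2a^{p^k}x^{p^k}+bx$. Up to that point your argument coincides with the paper's internal technique. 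Where you genuinely diverge is in bounding $\dim_{\gf(q)}\ker\psi$: the paper's style (see the proof of Lemma \ref{Lemma_value(s)(a,b)}) and the cited references proceed elementarily, by multiplicatively combining the equations satisfied by two radical elements --- e.g.\ from $\psi(x_1)=\psi(x_2)=0$ one gets $(bw)^{p^k}=bw$ for $w=x_1^{p^k}x_2-x_2^{p^k}x_1$, so $bw\in\gf(q)$ and (for $b\neq 0$) $w=0$ exactly when $x_1,x_2$ are $\gf(q)$-dependent, after which three independent solutions yield a contradiction --- whereas you package the bound into a generalized Moore determinant. Your route is cleaner and more conceptual; the elementary route has the virtue of using nothing beyond field arithmetic.

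The one step you should not leave as a black box is precisely that determinant claim, since it carries the entire content of the lemma. The classical Moore determinant (as in \cite{Niddle}) concerns the matrix $(x_j^{q^i})$ built from the standard $q$-Frobenius $\tau$; your matrix $(x_j^{p^{ki}})$ is built from $\sigma=\tau^{k/e}$, and its rows correspond to the non-consecutive $\tau$-powers $0$, $k/e \bmod s$, $2k/e \bmod s$, so the classical statement does not apply verbatim. The generalized statement is nevertheless true, exactly because $\gcd(k/e,s)=1$ makes $\sigma$ a generator of $\mathrm{Gal}(\gf(q^s)/\gf(q))$ with fixed field precisely $\gf(q)$, and it follows in a few lines from Euclidean division in the twisted polynomial ring $\gf(q^s)[\sigma]$: if $\phi$ has $\sigma$-degree $d$ and $\phi(v)=0$ with $v\neq 0$, then $\phi=\phi_1\circ\bigl(\sigma-\sigma(v)v^{-1}\,\mathrm{id}\bigr)$ with $\deg_\sigma\phi_1=d-1$ and $\ker\bigl(\sigma-\sigma(v)v^{-1}\,\mathrm{id}\bigr)=\gf(q)v$, so induction gives $\dim_{\gf(q)}\ker\phi\le d$. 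With that supplement (or a citation to the literature on $p^k$-linearized polynomials, e.g.\ generalized Gabidulin codes), your proof is complete: the $\gf(q)$-linearity of $x\mapsto x^{p^k}$, the trace manipulations, the nonvanishing of $(b,(2a)^{p^k},b^{p^k})$, and the conclusion $\rank(Q_{a,b})\ge s-2$ are all sound.
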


\vspace{2mm}

\begin{lemma}\label{Lemma_value(s)(a,b)}
Let $k$ be even and $e$ be odd, and  let $S(a,b)$ be defined by (\ref{eqn(s)(a,b)}). Then for any
$(a,b)\neq  (0,0)$, $S(a,b)$  takes on only the values from the set $\{0,\pm (p-1)p^{(m+e)/2}\}$.
\end{lemma}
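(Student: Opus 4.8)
The plan is to rewrite $S(a,b)$ as a sum of two exponential sums of the quadratic-form type governed by Lemma~\ref{lemma_quadra_important}, and then to control the ranks of the two underlying forms. First I would invoke the transitivity $\tr_{q^s/p}=\tr_{q/p}\circ\tr_{q^s/q}$ and, since $y\in\gf(p)\subseteq\gf(q)$, pull $y$ through the $\gf(q)$-linear inner trace. This turns (\ref{eqn(s)(a,b)}) into
\begin{eqnarray*}
S(a,b)=\sum_{y\in\gf(p)^*}\sum_{x\in\gf(q^s)}\zeta_p^{\tr_{q/p}(yQ_{a,b}(x))}+\sum_{y\in\gf(p)^*}\sum_{x\in\gf(q^s)}\zeta_p^{\tr_{q/p}(y\widetilde{Q}(x))},
\end{eqnarray*}
where $\widetilde{Q}(x)=\tr_{q^s/q}(-a\lambda x^2+b\lambda x^{1+p^k})=\lambda\,Q_{-a,b}(x)$, the scalar $\lambda\in\gf(q)$ factoring out of the inner trace. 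Because multiplying a quadratic form by the nonzero scalar $\lambda$ does not change its rank, $\widetilde{Q}$ has the same rank as $Q_{-a,b}$, so both relevant forms, $Q_{a,b}$ and $Q_{-a,b}$, are covered by Lemma~\ref{lemma_quadra_rank}.

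Next I would read off the ranks. Since $s$ is odd, the possible ranks $s$ and $s-2$ are odd while $s-1$ is even. With $e$ odd, Lemma~\ref{lemma_quadra_important} then says that each of the two terms is $0$ when the corresponding rank is $s$ or $s-2$, and equals $\pm(p-1)q^{\,s-(s-1)/2}=\pm(p-1)q^{(s+1)/2}=\pm(p-1)p^{(m+e)/2}$ when the rank is $s-1$ (here $q^{(s+1)/2}=p^{e(s+1)/2}=p^{(m+e)/2}$). Hence each term lies in $\{0,\pm(p-1)p^{(m+e)/2}\}$ and $S(a,b)$ is their sum.

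The crux is to exclude the value $\pm 2(p-1)p^{(m+e)/2}$, i.e.\ to show the two forms never contribute two reinforcing nonzero terms. The radical of $Q_{a,b}$ is the kernel of the linearized map $L_{a,b}(x)=2ax+bx^{p^k}+b^{p^{m-k}}x^{p^{m-k}}$, and raising the equation $L_{a,b}(x)=0$ to the $p^k$-th power shows that $\rank Q_{a,b}=s-1$ holds exactly when $\phi_+(x)=b^{p^k}x^{p^{2k}}+2a^{p^k}x^{p^k}+bx$ has precisely $q-1$ nonzero roots; the analogous statement holds for $Q_{-a,b}$ with $\phi_-$ (the middle sign reversed). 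For $a\neq 0$ the nonzero root sets of $\phi_+$ and $\phi_-$ are disjoint, since a common nonzero root forces $x=0$. The key arithmetic input is that $\gcd(2k,m)=e$ and $s$ is odd, so the index-$(q-1)$ subgroup $H$ of $\gf(q^s)^*$ satisfies $-1\notin H$ (equivalently $(q-1)\nmid (q^s-1)/2$ for odd $s$). This already disposes of the degenerate cases $b=0$ and $a=0$, where both forms reduce to $\tr_{q^s/q}(\pm a x^2)$ and $\lambda^{\pm1}\tr_{q^s/q}(bx^{1+p^k})$ respectively and are seen to have full rank $s$, and it is the mechanism I would use to show that $\phi_+$ and $\phi_-$ cannot both attain the full $q-1$ nonzero roots. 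As an alternative route to the same end, writing $Q_{\pm a,b}=R_b\pm P_a$ with $P_a(x)=\tr_{q^s/q}(ax^2)$ nondegenerate and comparing the discriminant $\eta_1(\Delta)$ of the pencil $R_b+tP_a$ at $t=\pm1$ should show that whenever both degenerations to rank $s-1$ occur, the two discriminant signs are opposite, so the terms cancel to $0$; note that $\lambda^{s-1}$ is a square, so $\lambda$ does not perturb this sign. Either way $S(a,b)\in\{0,\pm(p-1)p^{(m+e)/2}\}$.

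The main obstacle is precisely this simultaneous rank/discriminant analysis: proving that $\phi_+$ and $\phi_-$ cannot both possess $q-1$ nonzero roots with equal discriminant sign. This is the genuinely arithmetic step, amounting to a norm/subgroup count for a degree-two linearized polynomial, and it is where the oddness of $s$ and $e$ and the evenness of $k$ are essential.
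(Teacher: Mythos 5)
Your setup is sound and coincides with the paper's: the rewriting of $S(a,b)$ as the two quadratic-form sums attached to $Q_{a,b}$ and $\lambda Q_{-a,b}$, the remark that the scalar $\lambda$ does not change the rank, the application of Lemma~\ref{lemma_quadra_important} (each term vanishes for the odd ranks $s$ and $s-2$, and equals $\pm(p-1)p^{(m+e)/2}$ for rank $s-1$), and the key arithmetic fact you isolate (the image of $x\mapsto x^{p^k-1}$ is the index-$(q-1)$ subgroup $H$ of $\gf(q^s)^*$, whose order $(q^s-1)/(q-1)$ is odd since $s$ is odd, so $-1\notin H$) are all correct. But there is a genuine gap exactly where you flag ``the main obstacle'': you never prove that $Q_{a,b}$ and $Q_{-a,b}$ cannot both have rank $s-1$. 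The only thing you actually establish toward this --- that for $a\neq0$ the nonzero root sets of $\phi_+$ and $\phi_-$ are disjoint --- excludes nothing: two linearized polynomials can each have $q-1$ nonzero roots in disjoint sets. Your alternative route (opposite discriminant signs for the pencil $R_b+tP_a$ whenever both degenerations occur) is likewise only asserted (``should show''), and since that configuration in fact never occurs, there is no direct computation available; one must derive a contradiction anyway.

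The missing idea, which is the heart of the paper's proof, is a combination trick. Suppose $b\neq0$ and both forms are degenerate; pick nonzero $x_1$ with $\phi_+(x_1)=0$ and nonzero $x_2$ with $\phi_-(x_2)=0$. Multiply the first equation by $x_2^{p^k}$, the second by $x_1^{p^k}$, and add: the $a$-terms cancel, and since $x_1^{p^{2k}}x_2^{p^k}+x_2^{p^{2k}}x_1^{p^k}$ is the $p^k$-th power of $x_1x_2\bigl(x_1^{p^k-1}+x_2^{p^k-1}\bigr)$, the whole expression collapses to
\begin{eqnarray*}
u^{p^k}+u=u\left(u^{p^k-1}+1\right)=0,\qquad u=bx_1x_2\left(x_1^{p^k-1}+x_2^{p^k-1}\right).
\end{eqnarray*}
Now your fact $-1\notin H$ is applied twice: $(x_1/x_2)^{p^k-1}\neq-1$ gives $u\neq0$, and $u^{p^k-1}\neq-1$ gives $u^{p^k-1}+1\neq0$, a contradiction. (This also covers $a=0$, where $Q_{a,b}=Q_{-a,b}$: take $x_2=x_1$, so $u=2bx_1^{p^k+1}\neq0$; the case $b=0$, $a\neq 0$ is immediate since then both forms are nondegenerate.) Hence at least one of the two forms has full odd rank $s$, so at most one of the two terms in $S(a,b)$ is nonzero, which is exactly what the lemma needs. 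Without this step, or a substitute for it, your argument only yields $S(a,b)\in\{0,\pm(p-1)p^{(m+e)/2},\pm2(p-1)p^{(m+e)/2}\}$.
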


\begin{proof}
According to the definition of $S(a,b)$, we have
\begin{eqnarray*}
S(a,b)=\sum_{y\in {\gf(p)^*}}\sum_{x\in \gf(q^s)}\left(\zeta_p^{\tr_{q/p}(y Q_{a,b}(x))}+ \zeta_p^{\tr_{q/p}(y\lambda Q_{-a,b}(x))}\right)
\end{eqnarray*}
where $Q_{a,b}(x)$ is given by (\ref{eqn_Q_{a,b}}).
We now prove that at least one of the quadratic forms $Q_{a,b}$ and $Q_{-a,b}$ has rank $s$.
When $b=0$, it is easy to check that both $Q_{a,b}$ and $Q_{-a,b}$ have rank $s$ for any
nonzero $a$.
When $b\neq 0$, suppose on the contrary that both $Q_{a,b}$ and $Q_{-a,b}$ have rank less than $m$.
Then there are two nonzero elements $x_1,x_2\in \gf{(q^s)}$
such that
\begin{eqnarray}\label{eqn_Q(x+z)-Q(x)_1}
Q_{a,b}(x_1+z)-Q_{a,b}(x_1)-Q_{a,b}(z)=0,~~ \forall z\in \gf(q^s)
\end{eqnarray}
and
\begin{eqnarray}\label{eqn_Q(x+z)-Q(x)_2}
Q_{-a,b}(x_2+z)-Q_{-a,b}(x_2)-Q_{-a,b}(z)=0,~~ \forall z\in \gf(q^s).
\end{eqnarray}
Note that
\begin{eqnarray*}
Q_{a,b}(x+z)-Q_{a,b}(x)-Q_{a,b}(z)=\tr_{q^s/q}(z(2ax+bx^{p^k}+b^{p^{-k}}x^{p^{-k}})).
\end{eqnarray*}
It then follows from (\ref{eqn_Q(x+z)-Q(x)_1}) and (\ref{eqn_Q(x+z)-Q(x)_2}), respectively, that
\begin{eqnarray*}
b^{p^k}x_1^{p^{2k}}+2a^{p^{k}}x_1^{p^k}+bx_1=0
\end{eqnarray*}
and
\begin{eqnarray*}
b^{p^k}x_2^{p^{2k}}-2a^{p^{k}}x_2^{p^k}+bx_2=0.
\end{eqnarray*}
Combining these two equations (the first one times $x_2^{p^k}$ plus the second one
times $x_1^{p^k}$) leads to
\begin{eqnarray}\label{eqn_u}
u (u^{p^{k}-1}+1)=0
\end{eqnarray}
where
\begin{eqnarray*}
u=bx_1x_2(x_1^{p^k-1}+x_2^{p^k-1}).
\end{eqnarray*}
Note that $x^{p^k-1}\neq -1$ for any $x\in \gf(q^s)^*$ since $k/e$ is even and $s$ is odd. It then follows that
 $u\neq 0$ and  $u^{p^{k}-1}+1\neq 0$. This is a contradiction with (\ref{eqn_u}).
Thus at least one of the quadratic forms $Q_{a,b}$ and $Q_{-a,b}$ has rank $s$
for any $(a,b)\neq (0,0)$. On the other hand, by Lemma \ref{lemma_quadra_important}, $S(a,b)\neq 0$ only if
$Q_{a,b}$ or $Q_{-a,b}$ has even rank. Thus, $S(a,b)=\pm (p-1)p^{(m+e)/2}$
if $Q_{a,b}$ has rank $s$ and $Q_{-a,b}$ has rank $s-1$ or $Q_{a,b}$ has rank $s-1$ and $Q_{-a,b}$ has rank $s$,
and otherwise $S(a,b)=0$. This completes the proof.
\end{proof}

\vspace{2mm}

\begin{theorem}\label{theorem_distribution(s)(a,b)}
Let $k$ be even and $e$ be odd. Then the value distribution
of $S(a,b)$ in (\ref{eqn(s)(a,b)}) is given by
\begin{eqnarray*}
\begin{array}{cccl}
2(p-1)p^{m}&\textrm{occurring}&~~1~~&\textrm{time}\\
(p-1)p^{(m+e)/ 2}&\textrm{occurring}&~~(p^{m-e}+p^{(m-e)/ 2})(p^m-1)~~&\textrm{times}\\
-(p-1)p^{(m+e)/ 2}&\textrm{occurring}&~~(p^{m-e}-p^{(m-e)/ 2})(p^m-1)~~&\textrm{times}\\
0&\textrm{occurring}&~~(p^m-2p^{m-e}+1)(p^m-1)~~&\textrm{times}.
\end{array}
\end{eqnarray*}
\end{theorem}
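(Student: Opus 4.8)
The plan is to exploit Lemma \ref{Lemma_value(s)(a,b)}, which tells us that for $(a,b)\neq(0,0)$ the value $S(a,b)$ lies in $\{0,\pm(p-1)p^{(m+e)/2}\}$, together with a handful of moment equations that pin down exactly how often each value occurs. Let me write $N_+, N_-, N_0$ for the number of pairs $(a,b)\in\gf(q^s)^2\setminus\{(0,0)\}$ giving $S(a,b)=(p-1)p^{(m+e)/2}$, $-(p-1)p^{(m+e)/2}$, and $0$ respectively. The single occurrence of $2(p-1)p^m$ at $(a,b)=(0,0)$ is immediate: there $Q_{0,0}\equiv 0$ has rank $r=0$, so each inner sum over $x$ is $q^s=p^m$, each sum over $y\in\gf(p)^*$ contributes $p-1$, and the two terms add to $2(p-1)p^m$. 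Since the three unknowns satisfy the trivial count
\begin{eqnarray*}
N_++N_-+N_0 = p^{2m}-1,
\end{eqnarray*}
I need two more independent linear relations, which I will obtain from the first and second power moments of $S(a,b)$.

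First I would compute $\sum_{(a,b)}S(a,b)$ over all $(a,b)\in\gf(q^s)^2$ (including the origin). Expanding the definition \eqref{eqn(s)(a,b)} and switching the order of summation, the sum over $(a,b)$ of $\zeta_p^{\tr_{q^s/p}(yax^2+ybx^{p^k+1})}$ forces, by orthogonality of characters, the conditions $yx^2=0$ and $yx^{p^k+1}=0$, i.e. $x=0$ (as $y\neq 0$). Each such collapse contributes a factor $q^{2s}=p^{2m}$, and summing the surviving $x=0$ terms over $y\in\gf(p)^*$ and over the two summands gives $\sum_{(a,b)}S(a,b)=2(p-1)p^{2m}$. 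Subtracting the origin's contribution $2(p-1)p^m$ yields the first moment over the nonzero pairs:
\begin{eqnarray*}
(p-1)p^{(m+e)/2}(N_+-N_-)=2(p-1)p^{2m}-2(p-1)p^m,
\end{eqnarray*}
which solves for $N_+-N_-$.

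Next I would compute the second moment $\sum_{(a,b)}S(a,b)^2$. This is the technical heart of the argument. Squaring \eqref{eqn(s)(a,b)} produces a fourfold character sum indexed by $y_1,y_2\in\gf(p)^*$ and two copies of $x$, and the sum over $(a,b)$ again invokes orthogonality, now forcing two linear conditions on the quadratic and $(p^k+1)$-linearized monomials in the two $x$-variables. Counting the solutions of this coupled system — which is where Proposition \ref{Propo_on_lambda} and the parity hypotheses ($k$ even, $e$ odd) re-enter, since they control the cross terms linking the $x^2$ and $\lambda x^2$ contributions — gives a closed form for $\sum_{(a,b)}S(a,b)^2$. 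After removing the origin's square $4(p-1)^2p^{2m}$, this produces
\begin{eqnarray*}
(p-1)^2p^{m+e}(N_++N_-)=\Big(\sum_{(a,b)}S(a,b)^2\Big)-4(p-1)^2p^{2m},
\end{eqnarray*}
giving $N_++N_-$. The hard part will be the bookkeeping in this second-moment evaluation: correctly resolving the number of simultaneous solutions $x$ in $\gf(q^s)$ to the paired monomial equations, and verifying that the anticipated value is consistent with $N_0\geq 0$. Once $N_++N_-$ and $N_+-N_-$ are both in hand, solving the resulting $2\times 2$ linear system gives $N_+$ and $N_-$ explicitly, and $N_0=p^{2m}-1-N_+-N_-$ follows, matching the claimed counts $(p^{m-e}\pm p^{(m-e)/2})(p^m-1)$ and $(p^m-2p^{m-e}+1)(p^m-1)$.
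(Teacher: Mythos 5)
Your plan is structurally identical to the paper's own proof: Lemma \ref{Lemma_value(s)(a,b)} restricts the values, the first moment pins down $N_+-N_-$, the second moment pins down $N_++N_-$, and your first-moment evaluation $\sum_{a,b}S(a,b)=2(p-1)p^{2m}$ is correct. The genuine gap is exactly the step you defer as ``the hard part'': you never evaluate the second moment. You assert that orthogonality over $(a,b)$ ``gives a closed form'' for $\sum_{a,b}S(a,b)^2$, but you do not state that closed form, and it is the right-hand side of your equation for $N_++N_-$. Without it the linear system cannot be solved, so the claimed frequencies are not derived; essentially all of the content of the theorem beyond Lemma \ref{Lemma_value(s)(a,b)} lives in this one number.

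Here is concretely what is missing. Expanding $S^2(a,b)$ and summing over $(a,b)$ yields $p^{2m}(\#S_1+\#S_2+\#S_3+\#S_4)$, where $S_1,S_4$ are the ``diagonal'' systems such as $y_1x_1^2-y_2x_2^2=0$, $y_1x_1^{p^k+1}-y_2x_2^{p^k+1}=0$ over $(y_1,y_2,x_1,x_2)\in\gf(p)^*\times\gf(p)^*\times\gf(q^s)\times\gf(q^s)$, and $S_2,S_3$ are the ``cross'' systems such as $y_1x_1^2+\lambda y_2x_2^2=0$, $y_1x_1^{p^k+1}-\lambda y_2x_2^{p^k+1}=0$. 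The crux of the whole proof is the asymmetry $\#S_1=\#S_4=(p-1)^2p^m$ versus $\#S_2=\#S_3=(p-1)^2$. For $S_1$, nonzero solutions force $(x_1/x_2)^{p^k-1}=1$, hence $x_1=cx_2$ with $c\in\gf(q)^*$ (since $\gcd(k,m)=e$), and then $c^{p^k+1}=c^2$ collapses the two equations into one, giving $(p-1)^2(p^m-1)$ nonzero solutions in addition to the $(p-1)^2$ solutions with $x_1=x_2=0$. For $S_2$, nonzero solutions force $(x_1/x_2)^{p^k-1}=-1$, which has no solution in $\gf(q^s)^*$ because $k/e$ is even and $s$ is odd --- the same unsolvability fact that drives Lemma \ref{Lemma_value(s)(a,b)} --- so only the $(p-1)^2$ trivial solutions survive. (Note it is this fact, not Proposition \ref{Propo_on_lambda} as you suggest, that controls the cross terms; the proposition was already consumed in deriving the weight formula before $S(a,b)$ was defined.) These counts give $\sum_{a,b}S^2(a,b)=2(p-1)^2p^{2m}(p^m+1)$, whence $N_++N_-=2p^{m-e}(p^m-1)$, and only then does your $2\times 2$ system produce the frequencies $(p^{m-e}\pm p^{(m-e)/2})(p^m-1)$ stated in the theorem.
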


\begin{proof}
It is clear that $S(0,0)=2(p-1)p^{m}$. According to Lemma \ref{Lemma_value(s)(a,b)}, we define
\begin{eqnarray*}
N_\epsilon=\#\{(a,b)\in \gf(q^s)^2\setminus (0,0)| ~~S(a,b)=\epsilon(p-1)p^{(m+e)/2}\}
\end{eqnarray*}
where $\epsilon=\pm 1$. Then we have
\begin{eqnarray}\label{eqn_final(s)(a,b)1}
\sum_{a,b}S(a,b)=2(p-1)p^{m}+(N_{1}-N_{-1})(p-1)p^{(m+e)/2}
\end{eqnarray}
and
\begin{eqnarray}\label{eqn_final(s)(a,b)2}
\sum_{a,b}S^2(a,b)=4(p-1)^2p^{2m}+(N_{1}+N_{-1})(p-1)^2p^{m+e}.
\end{eqnarray}
On the other hand, it follows from (\ref{eqn(s)(a,b)}) that
\begin{eqnarray}\label{eqn_final(s)(a,b)3}
\sum_{a,b}S(a,b)=2(p-1)p^{2m}
\end{eqnarray}
and
\begin{eqnarray}\label{eqn_final(s)(a,b)4}
\sum_{a,b}S^2(a,b)=p^{2m}(\#S_1+\#S_2+\#S_3+\#S_4)
\end{eqnarray}
where
\begin{eqnarray*}
&&S_1=\{(y_1,y_2,x_1,x_2)\in \Gamma| y_1x_1^2-y_2x_2^2=0, ~~y_1x_1^{p^k+1}-y_2x_2^{p^k+1}=0\}, \\
&&S_2=\{(y_1,y_2,x_1,x_2)\in \Gamma| y_1x_1^2+\lambda y_2x_2^2=0, ~~y_1x_1^{p^k+1}-\lambda y_2x_2^{p^k+1}=0\},\\
&&S_3=\{(y_1,y_2,x_1,x_2)\in \Gamma| -\lambda y_1x_1^2- y_2x_2^2=0, ~~\lambda y_1x_1^{p^k+1}- y_2x_2^{p^k+1}=0\},\\
&&S_4=\{(y_1,y_2,x_1,x_2)\in \Gamma| -\lambda y_1x_1^2+\lambda y_2x_2^2=0, ~~\lambda y_1x_1^{p^k+1}-\lambda y_2x_2^{p^k+1}=0\}.
\end{eqnarray*}
Herein, $\Gamma=\gf(p)^*\times \gf(p)^* \times \gf(q^s) \times \gf(q^s)$. It is not hard to prove that
\begin{eqnarray}\label{eqn(s)_1S4}
\#S_1=\#S_4=(p-1)^2p^m
\end{eqnarray}
and
\begin{eqnarray}\label{eqn(s)_2S3}
\#S_2=\#S_3=(p-1)^2.
\end{eqnarray}
Combining  Equations (\ref{eqn_final(s)(a,b)1})--(\ref{eqn(s)_2S3}), we get
\begin{eqnarray*}
&&N_1=(p^{m-e}+p^{(m-e)/ 2})(p^m-1),\\
&&N_{-1}=(p^{m-e}-p^{(m-e)/ 2})(p^m-1).
\end{eqnarray*}
Summarizing the discussion above completes the proof of this theorem.
\end{proof}

\vspace{3mm}

\begin{lemma}\label{Lemma_value_T(a,b)}
Let $k/e$ be odd and $T(a,b)$ be defined by (\ref{eqn_T(a,b)}). Then for any $(a,b)\in \gf(q^s)^2\setminus \{(0,0)\}$,  $T(a,b)$
takes on only the values from the set $\{0,\pm 2(p-1)p^{(m+e)/2}\}$.
\end{lemma}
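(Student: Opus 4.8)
The plan is to express $T(a,b)$ through the single quadratic form $Q_{a,b}$ of (\ref{eqn_Q_{a,b}}) and then to read its value off the rank of that form. Using transitivity of the trace together with the fact that $y$ and $\lambda$ lie in $\gf(q)$, so that $\tr_{q^s/p}(y\,\cdot)=\tr_{q/p}(y\,\tr_{q^s/q}(\cdot))$, and noting $\tr_{q^s/q}(-ax^2-bx^{p^k+1})=-Q_{a,b}(x)$, the definition (\ref{eqn_T(a,b)}) becomes
\[
T(a,b)=\underbrace{\sum_{y\in\gf(p)^*}\sum_{x\in\gf(q^s)}\zeta_p^{\tr_{q/p}(yQ_{a,b}(x))}}_{A}+\underbrace{\sum_{y\in\gf(p)^*}\sum_{x\in\gf(q^s)}\zeta_p^{\tr_{q/p}(y\lambda Q_{-a,-b}(x))}}_{B}.
\]
The structural fact that drives the whole argument is $Q_{-a,-b}=-Q_{a,b}$: scaling a quadratic form by the nonzero constant $-1$ preserves its rank, so the two forms underlying $A$ and $B$ share one common rank $r$, which by Lemma \ref{lemma_quadra_rank} equals $s$, $s-1$, or $s-2$. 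Since $s$ is odd, $r$ is even precisely when $r=s-1$.

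I would then evaluate the inner $x$-sums with Lemma \ref{lemma_quadra_basis}. Scaling a rank-$r$ form by a constant $c\in\gf(q)^*$ multiplies its invariant $\Delta$ by $c^r$, hence multiplies the value $\eta_1(\Delta)$ by $\eta_1(c)^r$, while leaving $r$ (and therefore the $q\bmod 4$ factor) unchanged. Taking $c=y$ in $A$ and $c=-y\lambda$ in $B$, and using $\eta_1(\lambda)=-1$ so that $\eta_1(-y\lambda)=-\eta_1(-1)\eta_1(y)$, a short computation gives
\[
B=(-\eta_1(-1))^{r}\,A,\qquad T(a,b)=\bigl(1+(-\eta_1(-1))^{r}\bigr)A,
\]
where $A=\eta_1(\Delta)\,c_0\,q^{\,s-r/2}\sum_{y\in\gf(p)^*}\eta_1(y)^{r}$ with $\Delta$ the invariant of $Q_{a,b}$ and $c_0\in\{1,(\sqrt{-1})^{r}\}$ the rank-dependent constant supplied by Lemma \ref{lemma_quadra_basis}.

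The proof finishes by splitting on the parity of $r$. If $r=s-1$ (the only even rank), then $(-\eta_1(-1))^{r}=1$ and $\sum_{y}\eta_1(y)^{r}=p-1$, so $T(a,b)=2A=\pm 2(p-1)q^{\,(s+1)/2}=\pm 2(p-1)p^{(m+e)/2}$. If $r$ is odd, then $(-\eta_1(-1))^{r}=-\eta_1(-1)$ and $T(a,b)=(1-\eta_1(-1))A$: when $\eta_1(-1)=1$ this is $0$ immediately, and when $\eta_1(-1)=-1$---which forces $q\equiv 3\pmod 4$ and hence $e$ odd---one has $\sum_{y}\eta_1(y)^{r}=\sum_{y}\eta_1(y)=\sum_{y}\eta_0(y)=0$, forcing $A=0$. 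Either way $T(a,b)\in\{0,\pm 2(p-1)p^{(m+e)/2}\}$, as claimed.

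The step I expect to be the real obstacle is the odd-rank case when $e$ is even. Unlike the situation in Lemma \ref{Lemma_value(s)(a,b)}, the hypothesis $k/e$ odd allows $e$ to be even, so one cannot simply invoke the vanishing clause of Lemma \ref{lemma_quadra_important} to kill the odd-rank sums. The escape is arithmetic: $\eta_1(-1)=-1$ is equivalent to $q\equiv 3\pmod 4$, which can occur only for odd $e$ (and $p\equiv 3\pmod 4$), exactly the regime in which $\eta_1=\eta_0$ on $\gf(p)^*$ and the character sum $\sum_y\eta_0(y)$ vanishes; when $e$ is even one instead has $\eta_1(-1)=1$, and the coupling $B=-A$ coming from $Q_{-a,-b}=-Q_{a,b}$ produces the cancellation directly. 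Verifying the passage $B=(-\eta_1(-1))^rA$---in particular that the rank-dependent factor $c_0$ is common to $A$ and $B$---is the one place where the bookkeeping must be carried out carefully.
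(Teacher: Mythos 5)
Your proof is correct and takes essentially the same route as the paper: both reduce $T(a,b)$ to the single quadratic form $Q_{a,b}$ (your $y\lambda Q_{-a,-b}$ is exactly the paper's $-y\lambda Q_{a,b}$), invoke the rank trichotomy $r\in\{s,s-1,s-2\}$ of Lemma \ref{lemma_quadra_rank}, and evaluate the sums via Lemmas \ref{lemma_quadra_basis} and \ref{lemma_quadra_important}, with the even rank $s-1$ producing $\pm 2(p-1)p^{(m+e)/2}$ and the odd ranks producing $0$. The only differences are bookkeeping: where the paper splits the odd-rank case on the parity of $e$ (using the vanishing clause of Lemma \ref{lemma_quadra_important} for odd $e$, and the nonsquareness of $-\lambda$ for even $e$), you split on the value of $\eta_1(-1)$, and your explicit relation $B=(-\eta_1(-1))^{r}A$ has the small added benefit of justifying that the two even-rank contributions carry the same sign, so that $T(a,b)=2A$ exactly---a point the paper asserts without comment when it writes $T(a,b)=\pm 2(p-1)p^{(m+e)/2}$.
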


\begin{proof}
According to the definition of $T(a,b)$, we have
\begin{eqnarray*}
T(a,b)=\sum_{y\in {\gf(p)^*}}\sum_{x\in \gf(q^s)}\left(\zeta_p^{\tr_{q/p}(y Q_{a,b}(x))}+ \zeta_p^{\tr_{q/p}(- y\lambda Q_{a,b}(x))}\right)
\end{eqnarray*}
where $Q_{a,b}(x)$ is given by (\ref{eqn_Q_{a,b}}). By Lemma \ref{lemma_quadra_rank}, for any $(a,b)\neq (0,0)$, the possible rank of $Q_{a,b}(x)$
is $s$, $s-1$, or $s-2$. Note that, for any $y\in \gf(p)^*$, the quadratic forms $y Q_{a,b}(x)$ and $-\lambda y Q_{a,b}(x)$
have the same rank with $Q_{a,b}(x)$. When $Q_{a,b}(x)$ has even rank $s-1$, by Lemma \ref{lemma_quadra_important}, we have
$T(a,b)=\pm 2(p-1)p^{(m+e)/2}$. When $Q_{a,b}(x)$ has odd rank $s$ or $s-2$, we distinguish between the following
two cases to show that $T(a,b)=0$. Case 1:  $e$ is odd. In this case, it follows again from
Lemma \ref{lemma_quadra_important}  that $T(a,b)=0$. Case 2: $e$ is even. In this case,
$-1$ is a square in $\gf(q)$ and thus $-\lambda$ is a nonsquare in $\gf(q)$. It then follows from  Lemma \ref{lemma_quadra_basis} that
\begin{eqnarray*}
&&\sum_{x\in \gf(q^s)}\left(\zeta_p^{\tr_{q/p}(y Q_{a,b}(x))}+ \zeta_p^{\tr_{q/p}(- y\lambda Q_{a,b}(x))}\right)\\
&&~~~=(\eta_1(y)+\eta_1(-y\lambda))\sum_{x\in \gf(q^s)}\zeta_p^{\tr_{q/p}(Q_{a,b}(x))}\\
&&~~~=0
\end{eqnarray*}
for any $y\in \gf(p)^*$. Thus  $T(a,b)=0$. This completes the proof.
\end{proof}
\vspace{3mm}

\begin{theorem}\label{Theorem_distribution_T(a,b)}
Let $k/e$ be odd. Then the value distribution
of $T(a,b)$ in  (\ref{eqn_T(a,b)}) is given by
\begin{eqnarray*}
\begin{array}{cccl}
2(p-1)p^{m}&\textrm{occurring}&~~1~~&\textrm{time}\\
2(p-1)p^{(m+e)/ 2}&\textrm{occurring}&~~\frac{1}{2}(p^{m-e}+p^{(m-e)/ 2})(p^m-1)~~&\textrm{times}\\
-2(p-1)p^{(m+e)/ 2}&\textrm{occurring}&~~\frac{1}{2}(p^{m-e}-p^{(m-e)/ 2})(p^m-1)~~&\textrm{times}\\
0&\textrm{occurring}&~~(p^m-p^{m-e}+1)(p^m-1)~~&\textrm{times}.
\end{array}
\end{eqnarray*}
\end{theorem}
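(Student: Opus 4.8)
The plan is to follow the two-moment method already used for Theorem \ref{theorem_distribution(s)(a,b)}, now applied to $T(a,b)$. First I would record that $T(0,0)=2(p-1)p^{m}$ and, invoking Lemma \ref{Lemma_value_T(a,b)}, set
$$N_{\epsilon}=\#\{(a,b)\in\gf(q^s)^2\setminus\{(0,0)\}\mid T(a,b)=2\epsilon(p-1)p^{(m+e)/2}\},\qquad \epsilon=\pm1,$$
so the whole distribution is determined once $N_1,N_{-1}$ are known (the number of pairs with $T=0$ being $p^{2m}-1-N_1-N_{-1}$). Writing the first two power moments of $T$ in terms of these counts gives
$$\sum_{a,b}T(a,b)=2(p-1)p^m+2(p-1)p^{(m+e)/2}(N_1-N_{-1})$$
and
$$\sum_{a,b}T^2(a,b)=4(p-1)^2p^{2m}+4(p-1)^2p^{m+e}(N_1+N_{-1}),$$
and the task reduces to evaluating the two left-hand sides directly from (\ref{eqn_T(a,b)}).

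The first moment is immediate: summing the two inner character sums of (\ref{eqn_T(a,b)}) over $a$ and $b$ and using orthogonality of additive characters forces $x=0$, so each contributes $(p-1)p^{2m}$ and $\sum_{a,b}T(a,b)=2(p-1)p^{2m}$. For the second moment I would expand the square, interchange the order of summation, and again sum out $a$ and $b$ by orthogonality, obtaining $\sum_{a,b}T^2(a,b)=p^{2m}(\#R_1+\#R_2+\#R_3+\#R_4)$, where $R_1,\dots,R_4\subseteq\gf(p)^*\times\gf(p)^*\times\gf(q^s)\times\gf(q^s)$ are the solution sets of the four pairs of equations produced by the four cross products of the two character sums, exactly in the spirit of $S_1,\dots,S_4$. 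Because of the sign pattern peculiar to $T$ (the term $\lambda^{(p^k+1)/2}=-\lambda$ coming from Proposition \ref{Propo_on_lambda}), the defining equations of $R_1$ and $R_4$ coincide, as do those of $R_2$ and $R_3$, so it suffices to compute $\#R_1$ and $\#R_2$.

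The crux is these two counts. In each set the solutions with $x_1=0$ force $x_2=0$ and contribute $(p-1)^2$. For $x_1\neq0$ (hence $x_2\neq0$), dividing the degree-$(p^k+1)$ equation by the quadratic one forces $w:=x_2/x_1$ (or $x_1/x_2$) to satisfy $w^{p^k-1}=1$, i.e. $w\in\gf(q)^*$ since $\gcd(k,m)=e$; the remaining quadratic condition then asks whether a prescribed element—lying in $\gf(p)^*$ for $R_1$ and in $\lambda\,\gf(p)^*$ for $R_2$—is a square in $\gf(q)^*$, each square contributing two admissible $w$. I expect the one genuinely delicate point to be that squareness in $\gf(q)^*$ depends on the parity of $e$: every element of $\gf(p)^*$ is a square in $\gf(q)^*$ when $e$ is even, whereas only the squares of $\gf(p)^*$ qualify when $e$ is odd. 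Consequently $\#R_1$ and $\#R_2$ individually take different shapes in the two parities, and since $\lambda$ is a nonsquare it swaps which of the two sets receives the ``extra'' solutions; nevertheless one verifies $\#R_1+\#R_2=2(p-1)^2p^m$ in both cases, whence $\sum_{a,b}T^2(a,b)=4(p-1)^2p^{3m}$ uniformly.

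Substituting the two evaluated moments into the linear system and solving yields $N_1-N_{-1}=(p^m-1)p^{(m-e)/2}$ and $N_1+N_{-1}=p^{m-e}(p^m-1)$, hence the stated frequencies $\tfrac12(p^{m-e}\pm p^{(m-e)/2})(p^m-1)$, with the value $2(p-1)p^m$ occurring once and the number of zeros obtained by subtraction as $(p^m-p^{m-e}+1)(p^m-1)$. Everything outside the parity-of-$e$ analysis of $\#R_1$ and $\#R_2$ is routine bookkeeping parallel to the proof of Theorem \ref{theorem_distribution(s)(a,b)}, so that case split is where I expect the real work to lie.
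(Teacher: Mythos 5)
Your proposal is correct and follows essentially the same two-moment argument as the paper: the same first and second power moments, and your sets $R_1,\dots,R_4$ are exactly the paper's $T_1,\dots,T_4$, with the identification $R_1=R_4$, $R_2=R_3$ and the solution of the resulting linear system for $N_1\pm N_{-1}$. In fact your counting is more careful than the paper's: the paper asserts $\#T_1=\#T_2=\#T_3=\#T_4=(p-1)^2p^m$, which holds only for odd $e$ (for even $e$ one finds $\#T_1=\#T_4=(p-1)^2(2p^m-1)$ and $\#T_2=\#T_3=(p-1)^2$), whereas you correctly observe that only the sum $\#R_1+\#R_2=2(p-1)^2p^m$ is independent of the parity of $e$ — which is all the proof needs.
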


\begin{proof}
It is clear that $T(0,0)=2(p-1)p^{m}$. According to Lemma \ref{Lemma_value_T(a,b)}, we define
\begin{eqnarray*}
n_\epsilon=\#\{(a,b)\in \gf(q^s)^2\setminus (0,0)| ~~T(a,b)= 2\epsilon(p-1)p^{(m+e)/2}\}
\end{eqnarray*}
where $\epsilon=\pm 1$. Then we have
\begin{eqnarray}\label{eqn_final_T(a,b)1}
\sum_{a,b}T(a,b)=2(p-1)p^{m}+2(n_{1}-n_{-1})(p-1)p^{(m+e)/2}
\end{eqnarray}
and
\begin{eqnarray}\label{eqn_final_T(a,b)2}
\sum_{a,b}T^2(a,b)=4(p-1)^2p^{2m}+4(n_{1}+n_{-1})(p-1)^2p^{m+e}.
\end{eqnarray}
On the other hand, it follows from (\ref{eqn_T(a,b)}) that
\begin{eqnarray}\label{eqn_final_T(a,b)3}
\sum_{a,b}T(a,b)=2(p-1)p^{2m}
\end{eqnarray}
and
\begin{eqnarray}\label{eqn_final_T(a,b)4}
\sum_{a,b}T^2(a,b)=p^{2m}(\#T_1+\#T_2+\#T_3+\#T_4)
\end{eqnarray}
where
\begin{eqnarray*}
&&T_1=\{(y_1,y_2,x_1,x_2)\in \Gamma| y_1x_1^2-y_2x_2^2=0, ~~y_1x_1^{p^k+1}-y_2x_2^{p^k+1}=0\},\\
&&T_2=\{(y_1,y_2,x_1,x_2)\in \Gamma| y_1x_1^2+\lambda y_2x_2^2=0, ~~y_1x_1^{p^k+1}+\lambda y_2x_2^{p^k+1}=0\},\\
&&T_3=\{(y_1,y_2,x_1,x_2)\in \Gamma| -\lambda y_1x_1^2- y_2x_2^2=0, ~~-\lambda y_1x_1^{p^k+1}- y_2x_2^{p^k+1}=0\},\\
&&T_4=\{(y_1,y_2,x_1,x_2)\in \Gamma| -\lambda y_1x_1^2+\lambda y_2x_2^2=0, ~~-\lambda y_1x_1^{p^k+1}+\lambda y_2x_2^{p^k+1}=0\}.
\end{eqnarray*}
Herein, $\Gamma=\gf(p)^*\times \gf(p)^* \times \gf(q^s) \times \gf(q^s)$. It is not hard to show that
\begin{eqnarray}\label{eqn_T_2T3}
\#T_1=\#T_2=\#T_3=\#T_4=(p-1)^2p^m.
\end{eqnarray}
Combining  Equations (\ref{eqn_final_T(a,b)1})--(\ref{eqn_T_2T3}), we get
\begin{eqnarray*}
&&n_1=\frac{1}{2}(p^{m-e}+p^{(m-e)/ 2})(p^m-1),\\
&&n_{-1}=\frac{1}{2}(p^{m-e}-p^{(m-e)/ 2})(p^m-1).
\end{eqnarray*}
The value distribution of $T(a,b)$ follows from the discussion above.
\end{proof}

\vspace{3mm}

\begin{table}[!t]
\renewcommand{\arraystretch}{1.5}
\centering
\begin{threeparttable}
\caption{Weight Distribution of $\C$ for even $k$ and odd $e$}\label{Table_even21}
\begin{tabular}{|l|l|}
\hline
 Hamming Weight& Frequency\\
\hline
\hline
  $0$ & $1$ \\
  \hline
 $p^m-p^{m-1}-\frac{p-1}{2} p^{(m+e-2)/2 }$ & $(p^{m-e}+p^{(m-e)/ 2})(p^m-1)$\\
\hline
$p^m-p^{m-1}$ & $(p^m-2p^{m-e}+1)(p^m-1)$\\
\hline
$p^m-p^{m-1}+\frac{p-1}{2} p^{(m+e-2)/ 2}$ & $(p^{m-e}-p^{(m-e)/ 2})(p^m-1)$\\
\hline
\end{tabular}
\begin{tablenotes}
\end{tablenotes}
\end{threeparttable}
\end{table}

\begin{theorem}\label{Theorem_main}
Let $k$ be even and $e$ be odd. Then the code $\C$ is a three-weight $p$-ary cyclic code with parameters
$[p^m-1,2m,p^m-p^{m-1}-\frac{p-1}{2} p^{(m+e-2)/ 2}]$. Moreover the weight distribution of $\C$ is given
in Table \ref{Table_even21}.
\end{theorem}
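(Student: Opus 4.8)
The plan is to read the weight distribution of $\C$ straight off the already-established value distribution of $S(a,b)$ in Theorem \ref{theorem_distribution(s)(a,b)}, so that the argument is essentially a dictionary translating exponential-sum values into Hamming weights. First I would pin down the two elementary parameters. The length is $p^m-1$, which is immediate from the codeword description $\mathbf{c}_{(a,b)}=(\tr_{q^s/p}(\cdots))_{t=0}^{q^s-2}$. For the dimension, since $\C$ has parity-check polynomial $h_1(x)h_2(x)$, its dimension equals $\deg(h_1 h_2)$; because $h_1$ and $h_2$ are distinct irreducible factors, each of degree $m$ (as recalled in Section \ref{sec-intro}), this degree is $2m$, so $|\C|=p^{2m}$.

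The second preliminary step, which I regard as the only genuinely non-routine point, is to ensure that the frequencies attached to $S(a,b)$ transfer to codewords without any rescaling. The assignment $(a,b)\mapsto \mathbf{c}_{(a,b)}$ is $\gf(p)$-linear from $\gf(q^s)^2$ onto $\C$ by Delsarte's theorem, and its domain has $(p^m)^2=p^{2m}$ elements, matching $|\C|$; hence the map is a bijection. Consequently each codeword arises from exactly one pair $(a,b)$, and the multiplicity with which a given weight occurs in $\C$ equals the multiplicity with which the corresponding value is taken by $S(a,b)$. As a consistency check one verifies that the four frequencies in Theorem \ref{theorem_distribution(s)(a,b)} sum to $p^{2m}$.

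With these in hand, I would invoke the weight formula for the present regime. Since $k$ is even and $e$ is odd, $k/e$ is even, so by Proposition \ref{Propo_on_lambda} we are in the case $\textrm{WT}(\mathbf{c}_{(a,b)})=p^m-p^{m-1}-\frac{1}{2p}S(a,b)$. Substituting the four values from Theorem \ref{theorem_distribution(s)(a,b)}: the value $S=2(p-1)p^m$ gives weight $0$ (the unique zero codeword, frequency $1$); $S=0$ gives weight $p^m-p^{m-1}$; and $S=\pm(p-1)p^{(m+e)/2}$ gives weights $p^m-p^{m-1}\mp\frac{p-1}{2}p^{(m+e-2)/2}$, using that $(m+e-2)/2$ is an integer because $m+e=e(s+1)$ is even. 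Pairing each weight with the frequency attached to its $S$-value reproduces Table \ref{Table_even21} line by line.

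Finally I would record that the three nonzero weights are genuinely distinct and each actually occurs: the gap $\frac{p-1}{2}p^{(m+e-2)/2}$ is strictly positive since $m+e-2\geq 2$, and the three nonzero frequencies are positive because $m>e$ and $p^e>2$. Hence $\C$ is a three-weight code, and the smallest weight, $p^m-p^{m-1}-\frac{p-1}{2}p^{(m+e-2)/2}$, is the minimum distance, yielding the stated parameters $[p^m-1,\,2m,\,p^m-p^{m-1}-\frac{p-1}{2}p^{(m+e-2)/2}]$. The only real obstacle is the bookkeeping of the second paragraph: without the bijectivity argument one could not conclude that the codeword frequencies coincide with the $S(a,b)$ frequencies rather than being some divisor of them. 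Everything else is substitution into a formula proved earlier.
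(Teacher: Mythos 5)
Your proof is correct and follows essentially the same route as the paper, which likewise derives the parameters and Table \ref{Table_even21} by combining the weight formula (\ref{eqn_weight_even_case}) with the value distribution of $S(a,b)$ in Theorem \ref{theorem_distribution(s)(a,b)}. Your additional bookkeeping (the bijectivity of $(a,b)\mapsto\mathbf{c}_{(a,b)}$, the parity of $m+e$, and the positivity of the frequencies) merely makes explicit what the paper's two-line proof leaves implicit.
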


\begin{proof}
The length and dimension of $\C$ follow directly from its definition. The minimal distance and weight distribution of $\C$
follow from Equation (\ref{eqn_weight_even_case}) and Theorem \ref{theorem_distribution(s)(a,b)}.
\end{proof}

\vspace{3mm}

The following are some examples of the codes.

\begin{example}
Let $p=3$ and $m=3$, and $k=2$. Then the code $\C$ is a $[26,6,15]$ code over $\gf(3)$ with the weight enumerator
\begin{eqnarray*}
&&1+ 312^{15}+260 x^{18}+156 x^{21}.
\end{eqnarray*}
It has the same parameters with the best known cyclic codes
in the Database of best linear codes known maintained by
         Markus Grassl at http://www.codetables.de/.      It is also optimal since the upper bound is $15$.
\end{example}

\begin{example}
Let $p=3$, $m=5$ and $k=4$. Then the code $\C$ is a $[242,6,153]$ code over $\gf(3)$ with the weight enumerator
\begin{eqnarray*}
&&1+ 21780^{153}+19844 x^{162}+17424 x^{171}.
\end{eqnarray*}
It has the same parameters with the best known cyclic codes
in the Database. It is optimal or almost optimal since the upper bound on the minimal distance of  any ternary linear code with
length $242$ and dimension $6$ is $154$.
\end{example}

\begin{example}
Let $p=5$, $m=3$, and $k=2$. Then the code $\C$ is a $[124,6,90]$ code over $\gf(5)$ with the weight enumerator
\begin{eqnarray*}
&&1+ 3720^{90}+9424 x^{100}+2480 x^{110}.
\end{eqnarray*}
The best known linear code over $\gf(5)$ with length $124$ and dimension $6$ has minimal distance 95.
\end{example}

\vspace{3mm}

\begin{table}[!t]
\renewcommand{\arraystretch}{1.5}
\centering
\begin{threeparttable}
\caption{Weight Distribution of $\C$ for odd $k/e$}\label{Table_even22}
\begin{tabular}{|l|l|}
\hline
 Hamming Weight& Frequency\\
\hline
\hline
  $0$ & $1$ \\
  \hline
 $p^m-p^{m-1}-{(p-1)}p^{(m+e-2)/ 2}$ & $\frac{1}{2} (p^{m-e}+p^{(m-e)/ 2})(p^m-1)$\\
\hline
$p^m-p^{m-1}$ & $(p^m-p^{m-e}+1)(p^m-1)$\\
\hline
$p^m-p^{m-1}+{(p-1)}p^{(m+e-2)/ 2}$ & $\frac{1}{2} (p^{m-e}-p^{(m-e)/ 2})(p^m-1)$\\
\hline
\end{tabular}
\begin{tablenotes}
\end{tablenotes}
\end{threeparttable}
\end{table}

\begin{theorem}\label{Theorem_main2}
Let $k/e$ be odd. Then the code $\C$ is a three-weight $p$-ary cyclic code with parameters
$[p^m-1, 2m, p^m-p^{m-1}-{(p-1)}p^{(m+e-2)/2}]$. Moreover the weight distribution of
$\C$ is given in Table \ref{Table_even22}.
\end{theorem}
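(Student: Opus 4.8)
The plan is to mirror the proof of Theorem \ref{Theorem_main} line for line, merely swapping the even-$k$ ingredients for their odd-$k/e$ counterparts. The parameters $[p^m-1,2m,\cdot]$ split into three independent pieces. The length $p^m-1=q^s-1$ is immediate from the construction of $\C$ as a cyclic code of that length. For the dimension I would invoke the standard fact that the $\gf(p)$-dimension of a cyclic code equals the degree of its parity-check polynomial; since the parity-check polynomial here is $h_1(x)h_2(x)$, with $h_1$ and $h_2$ each of degree $m$ and distinct (both recorded in Section \ref{sec-intro}), we get $\deg(h_1h_2)=2m$ and hence $\dim_{\gf(p)}\C=2m$.

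For the weight distribution the engine is Equation (\ref{eqn_weight_odd_case}), which gives ${\textrm{WT}}({\bf c}_{(a,b)})=p^m-p^{m-1}-\frac{1}{2p}T(a,b)$, combined with the value distribution of $T(a,b)$ supplied by Theorem \ref{Theorem_distribution_T(a,b)}. The step is then to push each of the four values of $T(a,b)$ through the affine map $t\mapsto p^m-p^{m-1}-\frac{t}{2p}$. The value $2(p-1)p^{m}$ (occurring once) maps to weight $0$, correctly accounting for the zero codeword $(a,b)=(0,0)$; the value $0$ maps to $p^m-p^{m-1}$; and the values $\pm 2(p-1)p^{(m+e)/2}$ map to $p^m-p^{m-1}\mp (p-1)p^{(m+e-2)/2}$, after using $\frac{1}{2p}\cdot 2(p-1)p^{(m+e)/2}=(p-1)p^{(m+e-2)/2}$. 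The three nonzero frequencies are inherited verbatim from $n_1$, $n_{-1}$, and the count of pairs $(a,b)$ with $T(a,b)=0$ computed in that theorem, so the four rows of Table \ref{Table_even22} drop out immediately.

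The minimum distance is read off as the smallest nonzero weight. Since $(p-1)p^{(m+e-2)/2}>0$, among the three nonzero weights the value $p^m-p^{m-1}-(p-1)p^{(m+e-2)/2}$ is the smallest, giving $d=p^m-p^{m-1}-(p-1)p^{(m+e-2)/2}$; I would also note that its frequency $\frac{1}{2}(p^{m-e}+p^{(m-e)/2})(p^m-1)$ is strictly positive, so this weight is genuinely attained and $\C$ truly has three distinct nonzero weights.

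I do not anticipate any real obstacle, since all of the analytic difficulty has already been absorbed upstream: the rank analysis of $Q_{a,b}$ in Lemma \ref{lemma_quadra_rank}, the character-sum evaluation splitting on the parity of $e$ in Lemma \ref{Lemma_value_T(a,b)}, and the first- and second-moment count in Theorem \ref{Theorem_distribution_T(a,b)}. The only point that merits a moment's care is confirming that $h_1$ and $h_2$ are genuinely distinct irreducible factors, so that the dimension is exactly $2m$ rather than something smaller; this is guaranteed by the introduction's assertion that $h_0,h_1,h_2$ are pairwise distinct of degree $m$, equivalently that $(-\pi)^{-1}$ and $\pi^{-(p^k+1)/2}$ lie in different $p$-cyclotomic cosets modulo $p^m-1$.
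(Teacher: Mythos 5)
Your proposal is correct and follows exactly the paper's route: length and dimension from the definition of $\C$ (via the parity-check polynomial $h_1(x)h_2(x)$ of degree $2m$), and the weights and frequencies obtained by pushing the value distribution of $T(a,b)$ from Theorem \ref{Theorem_distribution_T(a,b)} through Equation (\ref{eqn_weight_odd_case}). The paper's own proof is just a terser statement of the same two steps, so the only difference is that you spell out the arithmetic (e.g.\ $\frac{1}{2p}\cdot 2(p-1)p^{(m+e)/2}=(p-1)p^{(m+e-2)/2}$) and the positivity of the smallest weight's frequency, which the paper leaves implicit.
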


\begin{proof}
The length and dimension of $\C$ follow directly from its definition. The minimal distance and weight distribution of $\C$
follow from Equation (\ref{eqn_weight_odd_case}) and Theorem
 \ref{Theorem_distribution_T(a,b)}.
\end{proof}

\vspace{3mm}

\begin{example}
Let $p=3$ and $m=6$, and $k=2$. Then the code $\C$ is a $[728,12,432]$ code over $\gf(3)$ with the weight enumerator
\begin{eqnarray*}
&&1+ 32760^{432}+472472 x^{486}+26208 x^{540}.
\end{eqnarray*}
\end{example}

\begin{example}
Let $p=5$ and $m=3$, and $k=1$. Then the code $\C$ is a $[124,6,80]$ code over $\gf(5)$ with the weight enumerator
\begin{eqnarray*}
&&1+ 1860^{80}+12524 x^{100}+1240 x^{120}.
\end{eqnarray*}
\end{example}

\begin{example}
Let $p=3$ and $m=9$, and $k=3$. Then the code $\C$ is a $[19682,18,12879]$ code over $\gf(3)$ with the weight enumerator
\begin{eqnarray*}
&&1+ 7439796^{12636}+373072310 x^{13122}+6908382 x^{13608}.
\end{eqnarray*}
\end{example}

\section{The duals of a subclass of the cyclic codes}\label{sec-duals3wc}

In this section, we study the duals of a subclass of the cyclic codes presented in this paper
and prove that they are optimal ternary linear codes.

\begin{theorem}
Let $p=3$, $k$ be even and $e$ be odd. Then the dual $\C^\perp$ of the cyclic code $\C$ in (\ref{eqn_def_C})
is an optimal ternary code with parameters $[3^m-1,3^m-1-2m,4]$.
\end{theorem}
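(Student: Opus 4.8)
The plan is to identify $\C^{\perp}$ concretely, pin down its minimum distance as $4$, and then invoke the sphere-packing bound to conclude optimality. First I would record the easy parameters: since $h_1(x)h_2(x)$ has degree $2m$ and $\C$ has dimension $2m$ by Theorem \ref{Theorem_main}, the dual $\C^{\perp}$ has length $3^m-1$ and dimension $3^m-1-2m$. To handle the minimum distance, I would use the description of $\C$ in (\ref{eqn_def_C}): a vector $v=(v_0,\dots,v_{n-1})\in\gf(3)^n$ lies in $\C^{\perp}$ iff $\tr_{q^s/p}(a\sum_t v_t(-\pi)^t + b\sum_t v_t\pi^{(p^k+1)t/2})=0$ for all $a,b\in\gf(q^s)$, which by nondegeneracy of the trace is equivalent to $\sum_t v_t(-\pi)^t=0$ and $\sum_t v_t(\pi^{(p^k+1)/2})^t=0$. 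Hence $\C^{\perp}$ is the cyclic code with a $2m\times(3^m-1)$ parity-check matrix $H$ whose $t$-th column is the $\gf(3)$-expansion of $((-\pi)^t,\ (\pi^{(p^k+1)/2})^t)$, and the minimum distance of $\C^{\perp}$ is the smallest number of $\gf(3)$-linearly dependent columns of $H$.

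For the lower bound $d(\C^{\perp})\ge 4$ I would show $A_1^{\perp}=A_2^{\perp}=A_3^{\perp}=0$. The cleanest route feeds the weight distribution of $\C$ from Table \ref{Table_even21} (only three nonzero weights) into the MacWilliams identity, equivalently into the first four Pless power moments $\sum_i i^{\nu}A_i$ for $\nu=0,1,2,3$. These moments form a triangular linear system in $A_0^{\perp},\dots,A_3^{\perp}$; since $A_0^{\perp}=1$ is known and the $A_i$ are given explicitly, solving the system is pure arithmetic and yields $A_1^{\perp}=A_2^{\perp}=A_3^{\perp}=0$. As a self-contained cross-check, the vanishing of $A_1^{\perp}$ and $A_2^{\perp}$ can be read off $H$ directly: no column is zero, and a two-column dependence would force $(-\pi)^j=(\pi^{(p^k+1)/2})^j\in\gf(3)^*$ for some $1\le j\le n-1$. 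Using that $m$ is odd (because $m=es$ with $e,s$ odd) and $k$ is even, one checks that $-\pi$ generates the group of squares, so $(-\pi)^j$ is never $-1$; this forces $j=(3^m-1)/2$, at which point $(-\pi)^j=1$ but $(\pi^{(p^k+1)/2})^j=-1$, a contradiction.

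The main obstacle is the vanishing of $A_3^{\perp}$, i.e. that no three columns of $H$ are dependent. After a cyclic shift one must rule out $c_1+A X+BY=0$ together with $c_1+c_2 X^{(p^k+1)/2}+c_3 Y^{(p^k+1)/2}=0$ for distinct $X,Y\in\gf(q^s)^*\setminus\{1\}$ and signs $A,B$, and the intertwining of the parities $(-1)^{t_i}$ with the coefficients makes a direct case analysis cumbersome. This is precisely why I would prefer to obtain $A_3^{\perp}=0$ from the power-moment computation, which sidesteps the casework entirely and only uses the already-established weight distribution of $\C$.

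Finally, for optimality I would apply the sphere-packing bound to rule out $d\ge 5$. A ternary $[3^m-1,\,3^m-1-2m]$ code of minimum distance at least $5$ would have to pack disjoint radius-$2$ balls, requiring $\sum_{i=0}^{2}\binom{n}{i}2^i=2(3^m-1)^2+1\le 3^{2m}$ with $n=3^m-1$; but $2(3^m-1)^2+1-3^{2m}=(3^m-1)(3^m-3)>0$ since $m\ge 3$, so no such code exists. Hence every ternary code with these length and dimension has minimum distance at most $4$, and combined with $d(\C^{\perp})\ge 4$ this gives $d(\C^{\perp})=4$ and shows $\C^{\perp}$ is optimal. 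I expect the only delicate point to be the bookkeeping in the power-moment step; everything else is routine.
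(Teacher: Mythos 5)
Your proposal is correct, and it reaches the one genuinely hard step --- showing $\C^\perp$ has no codeword of weight $3$ --- by a route that is genuinely different from the paper's. The paper argues directly on the parity-check structure: with $u=(3^k+1)/2$ and $\gcd(u,3^m-1)=1$, it rules out weight-$2$ codewords exactly as in your ``cross-check'' (your order-of-$(-\pi)$ argument and the paper's $c_1=c_2$, $t_2=t_1+(3^m-1)/2$ argument are the same computation in different clothing), and then for weight $3$ it writes $x_i=\pi^{t_i}$, splits into cases on the coefficients $c_i\in\gf(3)^*$ and the parities of the $t_i$, and manipulates $(x_2+x_3)^u=-x_2^u-x_3^u$ into $x_2x_3\bigl(x_2^{(3^k-1)/2}-x_3^{(3^k-1)/2}\bigr)^2=0$, contradicting $x_2\neq x_3$ by a square/nonsquare analysis. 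You instead feed the weight distribution of $\C$ (Table \ref{Table_even21}, i.e.\ Theorem \ref{Theorem_main}) into the first four Pless power moments and solve the triangular system for $A_1^\perp,A_2^\perp,A_3^\perp$. This is sound: the MacWilliams transform determines the dual weight distribution uniquely from that of $\C$, so the deferred arithmetic is guaranteed to give $A_1^\perp=A_2^\perp=A_3^\perp=0$ (one can spot-check it on the $[26,6,15]$ code with enumerator $1+312x^{15}+260x^{18}+156x^{21}$, whose dual is indeed $[26,20,4]$). What your route buys is the complete elimination of the field-equation casework --- which is the delicate part of the paper's proof, and which the paper itself only sketches (Case B is omitted as ``similar,'' and the argument as written leans on $\gcd(m,k)=1$ even though the theorem only assumes $e$ odd); your moment computation is insensitive to all of that. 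What it costs is self-containedness: your proof of the dual theorem now logically rests on the full weight-distribution theorem, and the bookkeeping with general parameters $m,e$ is left unexecuted, whereas the paper's argument is independent of Table \ref{Table_even21} and fully elementary. Your endgame is identical to the paper's: the sphere-packing bound with radius-$2$ balls, via $2(3^m-1)^2+1-3^{2m}=(3^m-1)(3^m-3)>0$, forces $d\le 4$ and hence $d=4$ with optimality; that arithmetic is right.
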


\begin{proof}
We only need to prove that  $\C^\perp$ has minimal distance $4$.
Clearly, the minimal distance $d$ of the dual of $\C^\perp$ cannot be $1$.
Let $u={(p^{k}+1)/2}$. Then $\gcd(u,p^m-1)=1$ since $k$ is even and $m$ is odd.
By the definition of $\C$, the
code $\C^\perp$ has a codeword of Hamming weight $2$ if and only if there exist two
elements $c_1,c_2\in \gf(3)^*$ and two distinct integers $0\leq t_1<t_2\leq p^m-2$ such that
\begin{eqnarray}\label{eqn_system_1}
\left\{
\begin{array}{llll}
c_1(-\pi)^{t_1}&+&c_2(-\pi)^{t_2}&=0\\
c_1\pi^{ut_1}&+&c_2\pi^{ut_2}&=0.
\end{array} \right.\ \
\end{eqnarray}
Note that $\gcd(u,p^m-1)=1$ and $t_1\neq t_2$. It follows from the second equation of
(\ref{eqn_system_1}) that $c_1=c_2$ and  $t_2=t_1+ {(p^m-1)/ 2}$. Then the first equation becomes
$2c_1(-\pi)^{t_1}=0$, which is impossible. Thus  the
code $\C^\perp$ does not have a codeword of Hamming weight $2$.

We now prove that $\C^\perp$ has no codeword of weight $3$. Otherwise,
there exist three elements $c_1,c_2,c_3$ in $\gf(3)^*$ and three distinct integers
$0\leq t_1<t_2<t_3\leq p^m-2$ such that
 \begin{eqnarray}\label{eqn_system_2}
\left\{
\begin{array}{llll}
c_1(-\pi)^{t_1}&+c_2(-\pi)^{t_2}&+c_3(-\pi)^{t_3} &=0\\
c_1\pi^{ut_1}&+c_2\pi^{ut_2}&+c_3\pi^{ut_3} &=0.
\end{array} \right.\ \
\end{eqnarray}
Due to symmetry it is sufficient to consider the following two cases.

{\textit{Case A}}, when $c_1=c_2=c_3=1$: In this case, (\ref{eqn_system_2}) becomes
 \begin{eqnarray}\label{eqn_system_3}
\left\{
\begin{array}{llllll}
(-\pi)^{t_1}&+(-\pi)^{t_2}&+(-\pi)^{t_3} &=0\\
\pi^{ut_1}&+\pi^{ut_2}&+\pi^{ut_3} &=0.
\end{array} \right.\ \
\end{eqnarray}
Let $x_i=\pi^{t_i}$ for $i=1,2,3$. Then $x_1,x_2,x_3\in \gf(3^m)^*$ and  are pairwise distinct.
Without loss of generality, we only need to consider the following two subcases.
\begin{itemize}

\item[1)] $t_1$ is even and $t_2,t_3$ are odd. In this subcase, we have
 \begin{eqnarray}\label{eqn_system_32}
\left\{
\begin{array}{ll}
x_1-x_2-x_3&=0\\
x_1^u+x_2^u+x_3^u &=0
\end{array} \right.\ \
\end{eqnarray}
which yields
\begin{eqnarray*}
(x_2+x_3)^u=-x_2^u-x_3^u.
\end{eqnarray*}
Thus
\begin{eqnarray*}
(x_2+x_3)^{2u}=(-x_2^u-x_3^u)^2
\end{eqnarray*}
which leads to
\begin{eqnarray}\label{eqn_case_1}
x_2x_3\left(x_2^{(p^k-1)/2}-x_3^{(p^k-1)/2}\right)^2=0.
\end{eqnarray}
It then follows that $(x_3/x_2)^{p^k-1}=1$. Note that $\gcd(m,k)=1$,
and $x_2,x_3$ are both nonsquare in $\gf(3^m)$ since $t_2,t_3$ are odd.
Thus $x_2=x_3$. This is a contradiction to the fact that $x_2 \neq x_3$.

\item[2)] $t_1$, $t_2$, and  $t_3$ are even. Similarly, in this subcase, we can arrive at
(\ref{eqn_case_1}) in which $x_2, x_3$ are both squares in $\gf(3^m)$ since $t_2,t_3$ are even.
It then follows from $(x_3/x_2)^{p^k-1}=1$  that $x_2=x_3$. This is again a contradiction.

\end{itemize}

{\textit{Case B}}, when $c_1=c_2=1$ and $c_3=-1$. The proof of this case is similar to Case A. We omit the details here.

Finally, by the Sphere Packing bound, the minimal distance $d\leq 4$. Hence $d=4$. This completes the proof.
\end{proof}

This subclass of ternary cyclic codes $\C^\perp$ are optimal in the sense that the minimum distance
is maximal for any ternary linear code with length $3^m-1$ and dimension $3^m-1-2m$.

\begin{example}
Let $p=3$ and $m=3$, and $k=2$. Then the code $\C^\perp$ is an optimal ternary cyclic
code with parameters
$[26,20,4]$ and generator polynomial $x^6 + 2x^5 + 2x^3 + x + 2$.
\end{example}

\begin{example}
Let $p=3$ and $m=5$, and $k=4$. Then the code $\C^\perp$ is an optimal ternary cyclic
code with parameters
$[242,10,4]$ and generator polynomial $x^{10} + 2x^9 + x^8 + x^7 + x^6 + 2x^2 + 2$.
\end{example}

\begin{example}
Let $p=3$ and $m=7$,  and $k=6$. Then the code $\C^\perp$ is an optimal ternary cyclic
code with parameters
$[2186,14,4]$ and generator polynomial $x^{14} + 2x^{13} + x^{11} + 2x^{10} + x^6 + x^5 + 2$.
\end{example}

\vspace{3mm}

\section{Summary and concluding remarks}\label{sec-summ}

In this paper, we presented a class of three-weight cyclic codes and determined their weight
distributions. Some of the codes are optimal, and the duals of a subclass of the cyclic codes
are also optimal.

While a lot of two-weight codes were discovered (see \cite{CK86,Ding11,Ding12,LZ12,Kant,Schmidt}),
only a small number of three-weight codes are known (\cite{CG84,Ding11,Ding12,Feng12,McGu}).
It would be good if more three-weight codes are constructed, in view of their applications in association
schemes and secret sharing schemes.

\end{document}